\providecommand{\tabularnewline}{\\}
\providecommand{\algorithmname}{Algorithm}
\let\oldforeign@language\foreign@language
\DeclareRobustCommand{\foreign@language}[1]{%
	\lowercase{\oldforeign@language{#1}}}
\let\oldforeign@language\foreign@language
\DeclareRobustCommand{\foreign@language}[1]{%
	\lowercase{\oldforeign@language{#1}}}
\newcommand{\MYfooter}{\smash{
		\hfil\parbox[t][\height][t]{\textwidth}{\centering
			\thepage}\hfil\hbox{}}}
\def\ps@IEEEtitlepagestyle{%
	\def\@oddhead{\parbox[t][\height][t]{\textwidth}{\centering \scriptsize
			Personal use of this material is permitted. Permission from the author(s) and/or copyright holder(s), must be obtained for all other uses. Please contact us and provide details if you believe this document breaches copyrights.\\
			\noindent\makebox[\linewidth]{}
		}\hfil\hbox{}}%
	\def\@evenhead{\scriptsize\thepage \hfil \leftmark\mbox{}}%
	\def\@oddfoot{\parbox[t][\height][l]{\textwidth}{
			\vspace{-20pt}{\rule{\textwidth}{0.4pt}}\\ \footnotesize\underline{To cite this article:}
			{\bf{\footnotesize\textcolor{red}{H. A. Hashim and K. G. Vamvoudakis, "Adaptive Neural Network Stochastic-Filter-based Controller for Attitude Tracking with Disturbance Rejection," IEEE Transactions on Neural Networks and Learning Systems, vol. PP, no. PP, pp. 1-11, 2022.}}} doi: \href{https://doi.org/10.1109/TNNLS.2022.3183026}{10.1109/TNNLS.2022.3183026}\\
			\noindent\makebox[\linewidth]
		}\hfil\hbox{}}%
	\def\@evenfoot{\MYfooter}}
\newtheorem{defn}{Definition}
\newtheorem{lem}{Lemma}
\newtheorem{thm}{Theorem}
\newtheorem{assum}{Assumption}
\begin{document}
	\bstctlcite{IEEEexample:BSTcontrol}

	\title{Adaptive Neural Network Stochastic-Filter-based Controller for Attitude Tracking with\\ Disturbance Rejection}

\author{Hashim A. Hashim and Kyriakos G. Vamvoudakis
	\thanks{This work was supported in part by National Sciences and Engineering Research Council of Canada (NSERC), under the grants RGPIN-2022-04937 and DGECR-2022-00103, and by NSF under grant Nos. CAREER CPS-$1851588$, CPS-$2038589$, and S\&AS-$1849198$ and by NASA ULI under grant number $80$NSSC$20$M$0161$.}
	\thanks{$^*$Corresponding author, H. A. Hashim is with the Department of Mechanical and Aerospace Engineering, Carleton University, Ottawa, ON, K1S 5B6, Canada (e-mail: hhashim@carleton.ca)}
	\thanks{K. G. Vamvoudakis is with the Daniel Guggenheim School of Aerospace Engineering, Georgia Institute of Technology, Atlanta, GA, 30332, USA e-mail: kyriakos@gatech.edu}

}



\maketitle

\begin{abstract}
This paper proposes a real-time neural network (NN) stochastic filter-based
controller on the Lie Group of the Special Orthogonal Group $SO(3)$
as a novel approach to the attitude tracking problem. The introduced
solution consists of two parts: a filter and a controller. Firstly,
an adaptive NN-based stochastic filter is proposed that estimates
attitude components and dynamics using measurements supplied by onboard
sensors directly. The filter design accounts for measurement uncertainties
inherent to the attitude dynamics, namely unknown bias and noise corrupting
angular velocity measurements. The closed loop signals of the proposed
NN-based stochastic filter have been shown to be semi-globally uniformly
ultimately bounded (SGUUB). Secondly, a novel control law on $SO(3)$ coupled with the proposed
estimator is presented. The control law addresses unknown disturbances.
In addition, the closed loop signals of the proposed filter-based
controller have been shown to be SGUUB. The proposed approach offers
robust tracking performance by supplying the required control signal
given data extracted from low-cost inertial measurement units. While
the filter-based controller is presented in continuous form, the discrete
implementation is also presented. Additionally, the unit-quaternion
form of the proposed approach is given. The effectiveness and robustness
of the proposed filter-based controller is demonstrated using its
discrete form and considering low sampling rate, high initialization
error, high-level of measurement uncertainties, and unknown disturbances.
\end{abstract}

\begin{IEEEkeywords}
		Neuro-adaptive, stochastic differential equations, nonlinear filter,
attitude tracking control, observer-based controller.
\end{IEEEkeywords}

\IEEEpeerreviewmaketitle{}

\section{Introduction}

\IEEEPARstart{A}{ttitude} estimation and tracking control of a rigid-body
rotating in three-dimensional (3D) space are indispensable tasks for
majority of robotics and aerospace applications. Examples include
satellites, rotating radars, unmanned aerial vehicles (UAVs), space
telescopes, to name a few \cite{zlotnik2016nonlinear,hashim2022ExpVTOL,hu2017observer,hashim2019SO3Wiley,chai2019six,lee2012robust}.
The research in area of attitude estimation and control has made a
great leap forward with the introduction of micro-electro-mechanical
systems (MEMS) \cite{kuo2008open} that allowed for the design of
compact low-cost onboard units such as inertial measurement units
(IMUs) and magnetic, angular rate, and gravity (MARG) sensors. Although
low-cost sensors are inexpensive, low-weight, compact, and power-efficient,
they supply imperfect measurements corrupted with unknown bias and
noise \cite{pham2015gain,hashim2019SO3Wiley,zlotnik2016nonlinear}.
Controlling rigid-body's rotation in 3D space requires the knowledge
of the true attitude. Unfortunately, the true attitude is unknown
and has to be obtained through estimation or algebraic reconstruction
using, for instance IMU or MARG sensors. The algebraic attitude reconstruction
involves using number of measurements and an algebraic algorithm,
for instance QUEST algorithm \cite{shuster1981three} or singular
value decomposition (SVD) \cite{markley1988attitude}. However, algebraic
reconstruction is ineffective when sensor measurements are heavily
contaminated by uncertainties. As such, estimation approaches that
use filtering techniques have acquired great importance \cite{choukroun2006novel,pham2015gain,markley2003attitude,crassidis2003unscented,hashim2019SO3Wiley}.

Conventionally, attitude estimation has been addressed considering
Kalman-type filters \cite{hashim2018SO3Stochastic}. Examples include Kalman filter (KF) \cite{choukroun2006novel},
extended Kalman filter (EKF) \cite{pham2015gain}, multiplicative
EKF (MEKF) \cite{markley2003attitude}, unscented Kalman filter (UKF)
\cite{crassidis2003unscented}, and invariant EKF (IEKF) \cite{phogat2020invariant}.
The filters in \cite{choukroun2006novel,pham2015gain,markley2003attitude,crassidis2003unscented}
are unit-quaternion based which offers nonsingularity of attitude
representation, but is challenged by nonuniqueness \cite{shuster1993survey}.
To address the nonuniqueness limitation of unit-quaternion, a set
of nonlinear attitude observing/filtering algorithms on the Special
Orthogonal Group $\mathbb{SO}(3)$ have been proposed \cite{mahony2008nonlinear,zlotnik2016nonlinear,hashim2019SO3Wiley,hashim2018SO3Stochastic}.
$\mathbb{SO}(3)$ provides global and unique attitude representation
\cite{shuster1993survey}. Moreover, in comparison with Kalman-type
filters, nonlinear attitude filtering solutions on $\mathbb{SO}(3)$
have been shown to be 1) simpler in design, 2) computationally cheap,
and 3) better in terms of tracking performance \cite{mahony2008nonlinear,zlotnik2016nonlinear,hashim2019SO3Wiley}.
With regard to attitude control, over the last two decades multiple
successful control strategies have been introduced to control rigid-body's
attitude given accurate attitude and angular velocity information
\cite{zhu2011adaptive,zou2010quaternion,tian2015finite}. Other solutions
developed attitude tracking control schemes reliant on the true attitude
information without angular velocity measurements \cite{zlotnik2014rotation,xiao2014fault}.
In the above-discussed literature, attitude estimation and tracking
control solutions are designed separately. However, the attitude problem
is highly nonlinear and coupling a standalone filter design with a
standalone controller design cannot guarantee overall stability, especially
if 1) the rigid-body is equipped with a low-cost measurement unit
and 2) a large initialization error is present between the true and
the estimated states \cite{hashim2022ExpVTOL}. 

With an objective of overcoming the overall stability challenge, several
state-of-the-art observer-based controllers have been proposed, such
as, an observer-based controller guaranteeing local exponential stability
\cite{caccavale1999output}, a full-state observer-based controller
for rigid-body motion \cite{salcudean1991globally}, a hybrid control
scheme ensuring semi-global asymptotic stability reliant on a switching
observer for restoring angular velocity data \cite{mayhew2011quaternion},
observer-based controller with finite time convergence \cite{hu2017observer},
and observer-based controller for unknown exterior disturbances \cite{liu2018disturbance}.
The limitations of the existing observer-based controller solutions,
such as \cite{hu2017observer,caccavale1999output,salcudean1991globally,mayhew2011quaternion,liu2018disturbance}
are three-fold: 1) for the sake of simplicity these techniques disregard
uncertainties of the onboard sensing units in the stability analysis,
2) they rely on reconstructed attitude which increases the computational
cost, and 3) they only consider the case of known nonlinear dynamics
of the attitude problem. However, in practice, affordable systems
are likely to 1) be equipped with low-cost sensing units, 2) operate
in uncertain environments where the model dynamics may not be accurately
known, and 3) be affected by unknown disturbances.

Neural networks (NNs) are known to be a powerful tool for learning
and estimating complex nonlinear systems \cite{chai2019six}. Over
the past few years, adaptive NNs have been shown to be efficient for
online estimation of unknown high-order nonlinear dynamics. Successful
applications of NNs include but are not limited to nonlinear attitude filtering \cite{hashim2022NNFilter},  adaptive consensus of networked systems \cite{zheng2017consensus}, trajectory tracking
of ground robots \cite{chen2017robust}, stochastic nonlinear systems \cite{li2022command, hashim2022NNFilter},
and strict-feedback multi-input multi-output systems \cite{li2019adaptive}.
Accurate NN estimation of unknown high-order nonlinear dynamics results
in a successful control process \cite{chai2019six,chen2017robust,zheng2017consensus}.
With an objective of addressing the aforementioned shortcomings of
the observer-based controllers \cite{mayhew2011quaternion,caccavale1999output,salcudean1991globally,hu2017observer,liu2018disturbance},
in this paper, the nonlinear attitude problem is modelled on the Lie
Group of $\mathbb{SO}(3)$ that offers global and unique attitude
representation. The unknown bias and noise corrupting angular velocity
measurements are tackled by incorporating stochastic differential
equations in problem formulation. The measurement uncertainties, unknown
disturbances, and overall stability are addressed through proposing
NN stochastic filter-based controller for the attitude tracking problem.
The contributions of this paper are as follows:
\begin{enumerate}
	\item[1)] A real-time NN-based nonlinear stochastic filter able to use available
	measurements directly for the attitude estimation problem is proposed
	on $\mathbb{SO}(3)$;
	\item[2)] The proposed NN-based filter considers unknown random noise as well
	as constant bias corrupting the velocity measurement;
	\item[3)] Using Lyapunov stability, the closed loop error signals of the stochastic
	filter design are guaranteed to be semi-globally uniformly ultimately
	bounded (SGUUB) in mean square;
	\item[4)] A control law has been proposed considering states estimated by the
	NN-based filter, uncertain measurements, and unknown disturbances;
	\item[5)] The overall stability has been proven, and the closed loop error
	signals of the filter-based controller have been shown to be SGUUB
	using Lyapunov stability;
\end{enumerate}
To the best of our knowledge, the attitude tracking problem has not
been previously addressed using NN-based nonlinear stochastic filter-based
controller on $\mathbb{SO}(3)$. The proposed approach is robust against
disturbances and can be a prefect fit for systems equipped with low-cost
sensing units. Furthermore, the proposed approach provides strong
tracking performance, is computationally cheap, and has been tested
in its discrete form at a low sampling rate.

The paper is organized to contain eight Sections. Section \ref{sec:Preliminaries}
introduces preliminaries and math notation related to attitude and
$\mathbb{SO}(3)$. Section \ref{sec:Problem-Formulation} outlines
the attitude problem, presents the sensor measurements, error criteria,
and formulates the problem with respect to stochastic differential
equations. Section \ref{sec:NN-Filter} presents NN approximation
of the nonlinear attitude problem and proposes a novel NN-based stochastic
attitude filter. Section \ref{sec:Controller} introduces the control
law and the innovation terms. Section \ref{sec:Summary-of-Implementation}
presents a summary of the implementation in a discrete form. Section
\ref{sec:SE3_Simulations} demonstrates the numerical results. Section
\ref{sec:SE3_Conclusion} concludes the work.

\section{Preliminaries\label{sec:Preliminaries}}

$\mathbb{R}$ and $\mathbb{R}_{+}$ refer to a set of real numbers
and nonnegative real numbers, respectively, while $\mathbb{R}^{n\times m}$
stands for an $n$-by-$m$ dimensional space. $\mathbf{I}_{n}\in\mathbb{R}^{n\times n}$
describes an identity matrix, and $0_{n\times m}\in\mathbb{R}^{n\times m}$
represents a matrix of zeros. $||u||=\sqrt{u^{\top}u}$ denotes the
Euclidean norm of $u\in\mathbb{R}^{n}$. $||W||_{F}=\sqrt{{\rm Tr}\{WW^{*}\}}$
refers to the Frobenius norm of matrix $W\in\mathbb{R}^{q\times m}$
with $*$ standing for a conjugate transpose. $\exp(\cdot)$, $\mathbb{P}\{\cdot\}$,
and $\mathbb{E}[\cdot]$ refers to an exponential, a probability,
and an expected value of a component. Let us define $A\in\mathbb{R}^{n\times n}$
with $\lambda(A)=\{\lambda_{1},\lambda_{2},\ldots,\lambda_{n}\}$
being a set of eigenvalues. $\overline{\lambda}_{A}=\overline{\lambda}(A)$
represents the maximum value, and $\underline{\lambda}_{A}=\underline{\lambda}(A)$
stands for the minimum value of $\lambda(A)$. For simplicity, $\left\{ \mathcal{I}\right\} $
denotes a fixed inertial-frame and $\left\{ \mathcal{B}\right\} $
stands for a fixed body-frame. Orientation of a rigid-body is known
as attitude $R\in\mathbb{SO}(3)$ described by
\[
\mathbb{SO}(3)=\{R\in\mathbb{R}^{3\times3}|R^{\top}R=\mathbf{I}_{3}\text{, }{\rm det}(R)=+1\}
\]
with ${\rm det}(\cdot)$ denoting a determinant. $\mathfrak{so}(3)$
describes the Lie algebra of $\mathbb{SO}(3)$ given by
\begin{align*}
	\mathfrak{so}(3) & =\{[u]_{\times}\in\mathbb{R}^{3\times3}|[u]_{\times}^{\top}=-[u]_{\times},u\in\mathbb{R}^{3}\}\\{}
	[u]_{\times} & =\left[\begin{array}{ccc}
		0 & -u_{3} & u_{2}\\
		u_{3} & 0 & -u_{1}\\
		-u_{2} & u_{1} & 0
	\end{array}\right]\in\mathfrak{so}\left(3\right),\hspace{1em}u=\left[\begin{array}{c}
		u_{1}\\
		u_{2}\\
		u_{3}
	\end{array}\right]
\end{align*}
$\mathbf{vex}$ defines the inverse mapping of $[\cdot]_{\times}$
where $\mathbf{vex}:\mathfrak{so}(3)\rightarrow\mathbb{R}^{3}$ such
that $\mathbf{vex}([u]_{\times})=u,\forall u\in\mathbb{R}^{3}$ while
$\boldsymbol{\mathcal{P}}_{a}:\mathbb{R}^{3\times3}\rightarrow\mathfrak{so}(3)$
is an anti-symmetric projection operator where
\[
\boldsymbol{\mathcal{P}}_{a}(A)=\frac{1}{2}(A-A^{\top})\in\mathfrak{so}\left(3\right),\forall A\in\mathbb{R}^{3\times3}
\]
Consider $A=[a_{i,j}]_{i,j=1,2,3}\in\mathbb{R}^{3\times3}$ and define
\begin{equation}
	\boldsymbol{\Upsilon}(A)=\mathbf{vex}(\boldsymbol{\mathcal{P}}_{a}(A))=\frac{1}{2}\left[\begin{array}{c}
		a_{32}-a_{23}\\
		a_{13}-a_{31}\\
		a_{21}-a_{12}
	\end{array}\right]\in\mathbb{R}^{3}\label{eq:Attit_VEX}
\end{equation}
Define the Euclidean distance of $R$
\begin{equation}
	||R||_{{\rm I}}=\frac{1}{4}{\rm Tr}\{\mathbf{I}_{3}-R\}\in\left[0,1\right],\hspace{1em}R\in\mathbb{SO}(3)\label{eq:Attit_Ecul_Dist}
\end{equation}
with ${\rm Tr}\{\cdot\}$ referring to a trace of a matrix (visit
\cite{hashim2019SO3Wiley,hashim2018SO3Stochastic}). Let $R\in\mathbb{SO}(3)$,
$Z\in\mathbb{R}^{3\times3}$, $y,z\in\mathbb{R}^{3}$ and recall the
composition mapping in \eqref{eq:Attit_VEX}. The identities below
hold true and will be useful in the subsequent derivations:
\begin{align}
	z\times y & =yz^{\top}-zy^{\top}\label{eq:Attit_Identity1}\\
	R[y]_{\times}R^{\top} & =[Ry]_{\times}\label{eq:Attit_Identity2}\\
	{\rm Tr}\{Z[y]_{\times}\} & ={\rm Tr}\{\boldsymbol{\mathcal{P}}_{a}(Z)[y]_{\times}\}=-2\boldsymbol{\Upsilon}(Z)^{\top}y\label{eq:Attit_Identity3}
\end{align}

\section{Problem Formulation\label{sec:Problem-Formulation}}

\subsection{Attitude Dynamics and Measurements}

Consider $R\in\mathbb{SO}(3)$ as the rigid-body's attitude and $\Omega$
as the rigid-body's angular velocity where $R,\Omega\in\{\mathcal{B}\}$.
The true attitude and angular velocity dynamics are given by
\begin{equation}
	\begin{cases}
		\dot{R} & =R[\Omega]_{\times}\\
		J\dot{\Omega} & =[J\Omega]_{\times}\Omega+\mathcal{T}+d
	\end{cases}\label{eq:Attit_R_dot}
\end{equation}
where $J=J^{\top}\in\mathbb{R}^{3\times3}$ denotes the rigid-body's
inertia matrix (positive-definite), $\mathcal{T}\in\mathbb{R}^{3}$
stands for the rotational torque (control input signal), and $d\in\mathbb{R}^{3}$
denotes an unknown constant disturbance vector with $J,\mathcal{T},d\in\{\mathcal{B}\}$.
The attitude can be defined using a set of $N$ observations in $\{\mathcal{I}\}$
and $N$ respective measurements in $\{\mathcal{B}\}$. Note that,
at least two non-collinear observations and measurements must be available.
Examples of common low-cost units for attitude determination and estimation
include \cite{hashim2018SO3Stochastic,choukroun2006novel,zlotnik2016nonlinear,hashim2019SO3Wiley,grip2011attitude}: 
\begin{itemize}
	\item An inertial measurement unit (IMU) composed of a gyroscope (supplies
	angular velocity measurements), a magnetometer (supplies direction
	of the Earth's magnetic field), and an accelerometer (supplies apparent
	acceleration measurements) or
	\item A magnetic, angular rate, and gravity (MARG) sensor.
\end{itemize}
Define $r_{i}\in\mathbb{R}^{3}$ as the $i$th observation in $\{\mathcal{I}\}$
and $y_{i}\in\mathbb{R}^{3}$ as the $i$th measurement in $\{\mathcal{B}\}$
for all $i=1,2,\ldots,N$. The $i$th measurement $y_{i}$ is defined
by \cite{hashim2018SO3Stochastic,hashim2019SO3Wiley,grip2011attitude,hashim2021_COMP_ENG_PRAC}
\begin{align}
	y_{i} & =R^{\top}r_{i}+b_{i}+n_{i}\in\mathbb{R}^{3},\hspace{1em}\forall i=1,2,\ldots,N\label{eq:Attit_Vec_yi}
\end{align}
where $b_{i}$ is unknown constant bias and $n_{i}$ refers to unknown
noise. The expression in \eqref{eq:Attit_Vec_yi} exemplifies measurements
supplied by an IMU such as a magnetometer and an accelerometer. It
is a common approach to normalize inertial-frame observations and
body-frame measurements as follows:
\begin{equation}
	{\bf r}_{i}=\frac{r_{i}}{||r_{i}||},\hspace{1em}{\bf y}_{i}=\frac{y_{i}}{||y_{i}||}\label{eq:Attit_Vec_yi_norm}
\end{equation}
An angular velocity measurement is given by:
\begin{equation}
	\Omega_{m}=\Omega+W_{b}+n\in\mathbb{R}^{3}\label{eq:Attit_Om_m}
\end{equation}
where $\Omega$ refers to the true angular velocity, while $W_{b}$
and $n$ describe unknown weighted bias (constant) and noise.

\subsection{Stochastic Reformulation}

In \eqref{eq:Attit_Om_m} $n$ is a bounded Gaussian noise vector
with $\mathbb{E}[n]=0$. Considering the fact that a derivative of
a Gaussian process leads to a Gaussian process \cite{khasminskii1980stochastic,ito1984lectures,hashim2019SO3Wiley},
one can re-express the noise vector $n$ in terms of Brownian motion
process as follows:
\begin{equation}
	n=\mathcal{Q}\frac{d\beta}{dt}\label{eq:Attit_n_beta}
\end{equation}
where $\beta\in\mathbb{R}^{3}$, and $\mathcal{Q}=\mathcal{Q}^{\top}\in\mathbb{R}^{3\times3}$
refers to an unknown weighted matrix with $\mathcal{Q}^{2}=\mathcal{Q}\mathcal{Q}^{\top}$
being the noise covariance. Note that $\mathbb{P}\{\beta(0)=0\}=1$
and $\mathbb{E}[\beta]=0$ \cite{khasminskii1980stochastic}. Hence,
from \eqref{eq:Attit_Om_m} and \eqref{eq:Attit_n_beta}, the upper
portion of the true attitude dynamics in \eqref{eq:Attit_R_dot} are
re-expressed in a stochastic form as below:
\begin{equation}
	dR=R[\Omega_{m}-W_{b}]_{\times}dt-R[\mathcal{Q}d\beta]_{\times}\label{eq:Attit_dR_dt}
\end{equation}
According to \eqref{eq:Attit_VEX}-\eqref{eq:Attit_Identity3}, it
becomes apparent that the Euclidean distance of the attitude stochastic
dynamics in \eqref{eq:Attit_dR_dt} is equivalent to
\begin{equation}
	d||R||_{{\rm I}}=\frac{1}{2}\boldsymbol{\Upsilon}(R)^{\top}(\Omega_{m}-b)dt-\frac{1}{2}\boldsymbol{\Upsilon}(R)^{\top}\mathcal{Q}d\beta\label{eq:Attit_dR_norm}
\end{equation}

\begin{defn}
	\label{def:Def_SGUUB}\cite{ji2006adaptive,hashim2019SO3Wiley} Consider
	the dynamics in \eqref{eq:Attit_dR_norm} and let $t_{in}$ be an
	initial time. $||R||_{{\rm I}}=||R(t)||_{{\rm I}}$ is defined to
	be almost SGUUB if for a known set $S_{\varrho}\in\mathbb{R}$ and
	$||R(t_{in})||_{{\rm I}}$ there exists a constant $c>0$ and a time
	constant $\tau_{c}=\tau_{c}(\eta,||R(t_{in})||_{{\rm I}})$ with $\mathbb{E}[||R(t_{in})||_{{\rm I}}]<c,\forall t>t_{in}+c$.%
\end{defn}
\begin{lem}
	\label{Lemm:Def_LV_dot}\cite{deng2001stabilization} Consider the
	dynamics in \eqref{eq:Attit_dR_norm} and let $\mathcal{U}(||R||_{{\rm I}})$
	be a twice differentiable cost function with the following differential
	operator:
	\begin{equation}
		\mathcal{L}\mathcal{U}(||R||_{{\rm I}})=\mathcal{U}_{1}^{\top}f+\frac{1}{2}{\rm Tr}\{gg^{\top}\mathcal{U}_{2}\}\label{eq:Attit_Vfunction_Lyap0}
	\end{equation}
	where $f=\frac{1}{2}\boldsymbol{\Upsilon}(R)^{\top}(\Omega_{m}-W_{b})\in\mathbb{R}$,
	$g=-\frac{1}{2}\boldsymbol{\Upsilon}(R)^{\top}\mathcal{Q}\in\mathbb{R}^{1\times3}$,
	$\mathcal{U}_{1}=\partial\mathcal{U}/\partial||R||_{{\rm I}}$, and
	$\mathcal{U}_{2}=\partial^{2}\mathcal{U}/\partial||R||_{{\rm I}}^{2}$.
	Define $\underline{\rho}_{1}(\cdot)$ and $\overline{\rho}_{2}(\cdot)$
	as class $\mathcal{K}_{\infty}$ functions, and let $\mu_{1}>0$ and
	$\mu_{2}\geq0$ where
	\begin{align}
		\underline{\rho}_{1}(||R||_{{\rm I}}) & \leq\mathcal{U}(||R||_{{\rm I}})\leq\overline{\rho}_{2}(||R||_{{\rm I}})\label{eq:Attit_Vfunction_Lyap}\\
		\mathcal{L}\mathcal{U}(||R||_{{\rm I}}) & =\mathcal{U}_{1}^{\top}f+\frac{1}{2}{\rm Tr}\{gg^{\top}\mathcal{U}_{2}\}\nonumber \\
		& \leq-\mu_{1}\mathcal{U}(||R||_{{\rm I}})+\mu_{2}\label{eq:Attit_dVfunction_Lyap}
	\end{align}
	Thereby, the dynamics in \eqref{eq:Attit_dR_norm} have an almost
	unique strong solution on $[0,\infty)$. Additionally, the solution
	$||R||_{{\rm I}}$ is bounded in probability by
	\begin{equation}
		\mathbb{E}[\mathcal{U}(||R||_{{\rm I}})]\leq\mathcal{U}(||R(0)||_{{\rm I}}){\rm exp}(-\mu_{1}t)+\mu_{2}/\mu_{1}\label{eq:Attit_EVfunction_Lyap}
	\end{equation}
	Moreover, the expression in \eqref{eq:Attit_EVfunction_Lyap} indicates
	that $||R||_{{\rm I}}$ is SGUUB.
\end{lem}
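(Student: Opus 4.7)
The plan is to treat Lemma~\ref{Lemm:Def_LV_dot} as a standard stochastic Lyapunov / It\^o--Gr\"onwall argument. First I would address the existence and uniqueness of a strong solution to \eqref{eq:Attit_dR_norm} on $[0,\infty)$: since $R\in\mathbb{SO}(3)$ is compact, the maps $R\mapsto\boldsymbol{\Upsilon}(R)$, $\Omega_m-W_b$ and $\mathcal{Q}$ produce smooth, bounded, locally Lipschitz coefficients $f$ and $g$, so the classical theory of SDEs (e.g.\ \cite{khasminskii1980stochastic,ito1984lectures}) supplies a local strong solution; the a priori bound \eqref{eq:Attit_Vfunction_Lyap} combined with the drift inequality \eqref{eq:Attit_dVfunction_Lyap} excludes finite-time explosion and extends the solution to $[0,\infty)$.

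Next, to obtain the moment bound \eqref{eq:Attit_EVfunction_Lyap}, I would apply It\^o's formula to the exponentially weighted cost $e^{\mu_1 t}\mathcal{U}(||R||_{{\rm I}})$. Using the decomposition in \eqref{eq:Attit_Vfunction_Lyap0} this gives
\begin{equation*}
d\bigl(e^{\mu_1 t}\mathcal{U}\bigr)=e^{\mu_1 t}\bigl(\mu_1\mathcal{U}+\mathcal{L}\mathcal{U}\bigr)dt+e^{\mu_1 t}\mathcal{U}_{1}g\,d\beta.
\end{equation*}
Substituting \eqref{eq:Attit_dVfunction_Lyap} the drift is bounded above by $\mu_2 e^{\mu_1 t}$. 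Introducing a localizing sequence of stopping times $\tau_{k}=\inf\{t\geq0:||R(t)||_{{\rm I}}\geq k\}$ so the It\^o integral is a genuine martingale, integrating from $0$ to $t\wedge\tau_{k}$ and taking expectations kills the stochastic term. Passing $k\to\infty$ (justified by the compactness of $\mathbb{SO}(3)$ together with monotone/dominated convergence) yields
\begin{equation*}
\mathbb{E}\bigl[e^{\mu_1 t}\mathcal{U}\bigr]\leq\mathcal{U}(||R(0)||_{{\rm I}})+\frac{\mu_2}{\mu_1}\bigl(e^{\mu_1 t}-1\bigr),
\end{equation*}
and dividing through by $e^{\mu_1 t}$ delivers exactly \eqref{eq:Attit_EVfunction_Lyap}. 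To reach SGUUB in the sense of Definition~\ref{def:Def_SGUUB}, I would combine the class-$\mathcal{K}_{\infty}$ lower bound $\underline{\rho}_{1}(||R||_{{\rm I}})\leq\mathcal{U}$ with \eqref{eq:Attit_EVfunction_Lyap}: for any $\varepsilon>0$ there is a settling time $\tau_{c}$, depending on $\mathcal{U}(||R(t_{in})||_{{\rm I}})$ and $\mu_1$, beyond which $\mathbb{E}[\mathcal{U}]\leq\mu_2/\mu_1+\varepsilon$, so $\mathbb{E}[||R||_{{\rm I}}]\leq\underline{\rho}_{1}^{-1}(\mu_2/\mu_1+\varepsilon)=:c$ for all $t\geq t_{in}+\tau_{c}$.

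The main obstacle I anticipate is the rigorous justification of the expectation step, that is, showing $\mathbb{E}\bigl[\int_{0}^{t}e^{\mu_1 s}\mathcal{U}_{1}g\,d\beta\bigr]=0$. Without an a priori moment bound on $\mathcal{U}_{1}g$, the stochastic integral is only a local martingale, so the localization-and-limit argument sketched above, rather than a direct appeal to martingale properties, is unavoidable. Once that technicality is handled, everything remaining is a deterministic Gr\"onwall comparison applied pointwise in expectation together with a routine class-$\mathcal{K}_{\infty}$ inversion to translate the bound on $\mathcal{U}$ into one on $||R||_{{\rm I}}$.
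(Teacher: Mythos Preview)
Your argument is the standard It\^o--Gr\"onwall proof of this stochastic Lyapunov result and is correct; note, however, that the paper does not supply its own proof of Lemma~\ref{Lemm:Def_LV_dot} but simply cites it from \cite{deng2001stabilization}, so there is no in-paper argument to compare against. What you have written is essentially the proof one finds in that reference.
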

\begin{lem}
	\label{lem:Lem_VEX_RMI}\cite{hashim2019SO3Wiley} Let $R\in\mathbb{SO}\left(3\right)$,
	$M=M^{\top}\in\mathbb{R}^{3\times3}$ with $rank(M)\geq2$, and $\overline{M}={\rm Tr}\{M\}\mathbf{I}_{3}-M$.
	Then, the following definitions hold:
	\begin{equation}
		\underline{\lambda}_{\overline{M}}^{2}||R||_{{\rm I}}\leq||\boldsymbol{\Upsilon}(MR)||^{2}\leq\overline{\lambda}_{\overline{M}}^{2}||R||_{{\rm I}}\label{eq:Attit_VEX_RMI}
	\end{equation}
\end{lem}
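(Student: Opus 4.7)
The plan is to parametrize $R$ by its unit quaternion $(Q_{0},q)\in\mathbb{R}\times\mathbb{R}^{3}$ with $Q_{0}^{2}+\|q\|^{2}=1$, which gives the Rodrigues-type form $R=(Q_{0}^{2}-\|q\|^{2})\mathbf{I}_{3}+2qq^{\top}+2Q_{0}[q]_{\times}$. A direct trace computation then yields $\|R\|_{\mathrm{I}}=\|q\|^{2}$, so the claim reduces to sandwiching $\|\boldsymbol{\Upsilon}(MR)\|^{2}$ between $\underline{\lambda}_{\overline{M}}^{2}\|q\|^{2}$ and $\overline{\lambda}_{\overline{M}}^{2}\|q\|^{2}$.

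Next I would compute $\boldsymbol{\Upsilon}(MR)$ explicitly. Because $M=M^{\top}$, the symmetric $(Q_{0}^{2}-\|q\|^{2})M$ piece drops out of $MR-(MR)^{\top}$, leaving $2(Mqq^{\top}-qq^{\top}M)+2Q_{0}(M[q]_{\times}+[q]_{\times}M)$. Two identities valid for symmetric $M$ then collapse this expression: (i) $M[q]_{\times}+[q]_{\times}M=[\overline{M}q]_{\times}$, verified by noting both sides are antisymmetric and agree on any orthonormal eigenbasis of $M$; and (ii) $Mqq^{\top}-qq^{\top}M=[q\times Mq]_{\times}$, which is immediate from identity~\eqref{eq:Attit_Identity1} with $z=q$, $y=Mq$. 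Applying $\mathbf{vex}$ gives
\[
\boldsymbol{\Upsilon}(MR)=Q_{0}\,\overline{M}q+q\times Mq.
\]
On squaring, the cross term $(\overline{M}q)^{\top}(q\times Mq)=({\rm Tr}\{M\}q-Mq)^{\top}(q\times Mq)$ vanishes, since $q\perp(q\times Mq)$ and $Mq\perp(q\times Mq)$. Hence
\[
\|\boldsymbol{\Upsilon}(MR)\|^{2}=Q_{0}^{2}\|\overline{M}q\|^{2}+\|q\times Mq\|^{2}.
\]

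The remaining step is the spectral bounding, and I expect it to be the main obstacle because of the interplay between the two nonnegative pieces and the constraint $Q_{0}^{2}+\|q\|^{2}=1$. I would work in an orthonormal eigenbasis of $M$ (hence also of $\overline{M}$, with eigenvalues $\mu_{i}={\rm Tr}\{M\}-\lambda_{i}$ so that $|\mu_{i}-\mu_{j}|=|\lambda_{i}-\lambda_{j}|$), obtaining $\|\overline{M}q\|^{2}=\|q\|^{2}\sum_{i}\mu_{i}^{2}\hat{q}_{i}^{2}$ and $\|q\times Mq\|^{2}=\|q\|^{4}\sum_{i<j}(\mu_{i}-\mu_{j})^{2}\hat{q}_{i}^{2}\hat{q}_{j}^{2}$, where $q=\|q\|\sum_{i}\hat{q}_{i}e_{i}$ and $\sum\hat{q}_{i}^{2}=1$. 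The hypothesis $\text{rank}(M)\geq 2$ guarantees that $\overline{M}$ is positive semi-definite with at least two strictly positive eigenvalues, so $\underline{\lambda}_{\overline{M}}$ and $\overline{\lambda}_{\overline{M}}$ are meaningful. The upper bound then follows from the Rayleigh estimate $\sum\mu_{i}^{2}\hat{q}_{i}^{2}\leq\overline{\lambda}_{\overline{M}}^{2}$, the eigenvalue-gap estimate $(\mu_{i}-\mu_{j})^{2}\leq\overline{\lambda}_{\overline{M}}^{2}$, and the constraint $Q_{0}^{2}+\|q\|^{2}=1$ combined with $\|q\|^{2}\leq 1$. For the lower bound I would discard the nonnegative cross-product contribution and apply the Rayleigh lower bound $\|\overline{M}q\|^{2}\geq\underline{\lambda}_{\overline{M}}^{2}\|q\|^{2}$; the delicate part is reconciling the residual $Q_{0}^{2}$ factor against the $\|R\|_{\mathrm{I}}$-proportional target, which forces me to also invoke the $\|q\times Mq\|^{2}$ term in the large-angle regime (where $Q_{0}^{2}\to 0$), using the rank hypothesis to ensure the eigenvalue gaps of $\overline{M}$ are strictly positive there. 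Collecting both bounds and substituting $\|R\|_{\mathrm{I}}=\|q\|^{2}$ yields the stated double inequality.
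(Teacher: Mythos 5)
The paper itself does not prove this lemma (it is imported from \cite{hashim2019SO3Wiley} without proof), so your proposal has to stand on its own. Its algebraic core is sound and is the standard route: with $R=(Q_{0}^{2}-\|q\|^{2})\mathbf{I}_{3}+2qq^{\top}+2Q_{0}[q]_{\times}$ you correctly get $\|R\|_{\rm I}=\|q\|^{2}$ and, using the two identities you quote, $\boldsymbol{\Upsilon}(MR)=Q_{0}\overline{M}q+q\times Mq$ with the two pieces orthogonal, hence $\|\boldsymbol{\Upsilon}(MR)\|^{2}=Q_{0}^{2}\|\overline{M}q\|^{2}+\|q\times Mq\|^{2}$. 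The upper bound then closes cleanly (note $q\times Mq=-q\times\overline{M}q$, so both terms are bounded by $\overline{\lambda}_{\overline{M}}^{2}$ times $Q_{0}^{2}\|q\|^{2}$ and $\|q\|^{4}$ respectively, and $Q_{0}^{2}+\|q\|^{2}=1$ finishes it) --- though be aware that your claim ``${\rm rank}(M)\geq2$ implies $\overline{M}\succeq0$'' needs $M\succeq0$, which symmetry alone does not give (e.g.\ $M={\rm diag}(1,-1,0)$); it does hold in the paper's setting where $M$ is a nonnegatively weighted sum of outer products.

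The genuine gap is the lower bound, and it cannot be repaired in the form stated: your own identity shows the inequality is false near the unstable equilibria. Take $Q_{0}=0$ and $q$ a unit eigenvector of $M$ (e.g.\ $M=\mathbf{I}_{3}$ and $R$ any rotation by $\pi$, which satisfies ${\rm rank}(M)=3\geq2$): then $q\times Mq=0$ and $Q_{0}^{2}\|\overline{M}q\|^{2}=0$, so $\boldsymbol{\Upsilon}(MR)=0$ while $\|R\|_{\rm I}=1$, contradicting $\underline{\lambda}_{\overline{M}}^{2}\|R\|_{\rm I}\leq\|\boldsymbol{\Upsilon}(MR)\|^{2}$. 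Your hope that strictly positive eigenvalue gaps of $\overline{M}$ rescue the large-angle regime fails twice over: ${\rm rank}(M)\geq2$ does not exclude repeated eigenvalues, and even with distinct eigenvalues the term $\|q\times Mq\|^{2}$ vanishes whenever $q$ is an eigenvector of $M$, independently of the gaps. What your decomposition does prove, by simply discarding the nonnegative cross term, is
\begin{equation*}
\|\boldsymbol{\Upsilon}(MR)\|^{2}\;\geq\;\underline{\lambda}_{\overline{M}}^{2}\,Q_{0}^{2}\|q\|^{2}\;=\;\underline{\lambda}_{\overline{M}}^{2}\,(1-\|R\|_{\rm I})\,\|R\|_{\rm I},
\end{equation*}
which is the form of the lower bound in the cited source; the statement as transcribed in this paper drops the factor $(1-\|R\|_{\rm I})$, which is only harmless when trajectories are kept away from $\|R\|_{\rm I}=1$ (consistent with the paper's standing assumption $\|\tilde{R}_{o}(0)\|_{\rm I}\neq1$). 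So the ``delicate part'' you flagged is not merely delicate --- it is impossible as posed, and the correct conclusion of your argument is the $(1-\|R\|_{\rm I})$-weighted bound above.
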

This work aims to develop a real-time NN stochastic filter-based controller
for the attitude tracking problem of a rigid-body in 3D space. An
adaptive NN-based stochastic filter on $SO(3)$ will be proposed for
estimating the attitude components, dynamics, and angular velocity
uncertainties (unknown constant covariance and bias) using onboard
sensor measurements directly. NN weights will be updated online and
the adaptation mechanism will be extracted using a novel Lyapunov
function candidate. Next, a control law on $SO(3)$ interconnected
with the proposed filter will be presented. The control law will account
for unknown disturbances affecting the rigid-body. Fig. \ref{fig:Graph_Summary}
presents a conceptual summary of the proposed methodology.

\begin{figure}[h]
	\centering{}\includegraphics[scale=0.22]{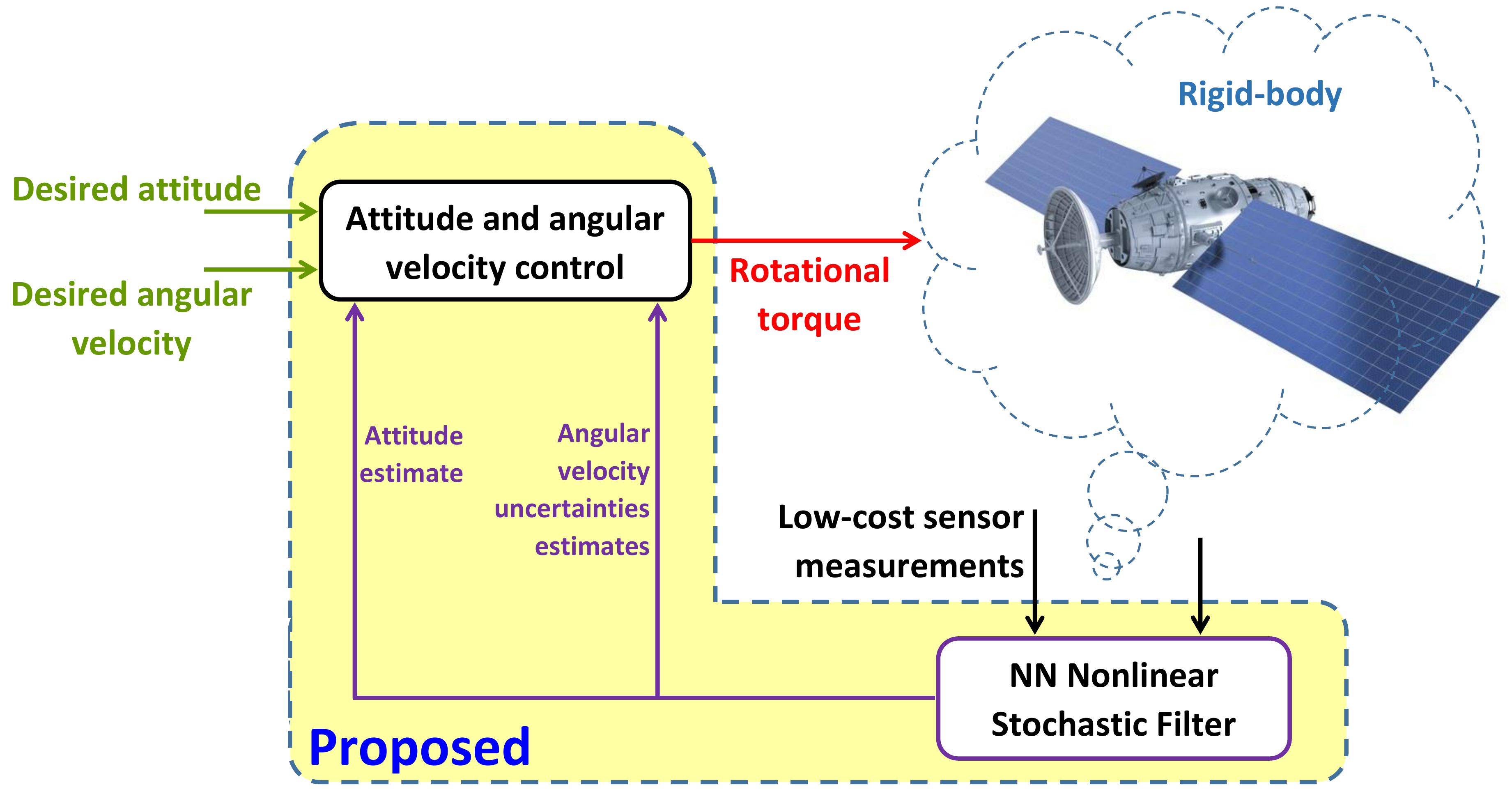}\caption{Illustrative diagram of the proposed filter-based controller for the
		attitude tracking problem.}
	\label{fig:Graph_Summary}
\end{figure}

\section{Neural Network-based Stochastic Filter\label{sec:NN-Filter}}

In this Section, our goal is to design a real-time NN-based nonlinear
stochastic filter for the attitude estimation problem that uses available
measurements directly without attitude reconstruction. Let us define
$\hat{R}\in\mathbb{SO}(3)$ as the estimate of $R$. Let the estimation
error be defined as
\begin{equation}
	\tilde{R}_{o}=R^{\top}\hat{R}\in\mathbb{SO}(3)\label{eq:Attit_Re}
\end{equation}
Let us define $\hat{W}_{b}$ as the estimate of $W_{b}$ in \eqref{eq:Attit_Om_m}.
Define the filter dynamics as follows:
\begin{equation}
	\dot{\hat{R}}=\hat{R}[\Omega_{m}-\hat{W}_{b}-C]_{\times}\label{eq:Attit_Rest_dot}
\end{equation}
where $C\in\mathbb{R}^{3}$ refers to a  correction matrix, $\hat{W}_{b}\in\mathbb{R}^{3\times1}$
being the estimate of $W_{b}$ in \eqref{eq:Attit_Om_m}, and $C$
and $\hat{W}_{b}$ will be designed subsequently. Let the estimation
error between $W_{b}$ and $\hat{W}_{b}$ be as follows:
\begin{align}
	\tilde{W}_{b} & =W_{b}-\hat{W}_{b}\in\mathbb{R}^{3}\label{eq:Attit_Wbe}
\end{align}

\subsection{Direct Measurement Setup}

In view of the vector measurements in \eqref{eq:Attit_Vec_yi}, let
us introduce the following variable:
\begin{equation}
	\hat{{\bf y}}_{i}=\hat{R}^{\top}{\bf r}_{i}\label{eq:Attit_yi_est}
\end{equation}
From \eqref{eq:Attit_Vec_yi_norm}, define the following two variables
\begin{equation}
	\begin{cases}
		M_{r} & =\sum_{i=1}^{N}s_{i}{\bf r}_{i}{\bf r}_{i}^{\top}\\
		M_{y} & =\sum_{i=1}^{N}s_{i}{\bf y}_{i}{\bf y}_{i}^{\top}=\sum_{i=1}^{N}s_{i}R^{\top}{\bf r}_{i}{\bf r}_{i}^{\top}R=R^{\top}M_{r}R
	\end{cases}\label{eq:Attit_Mr_My}
\end{equation}
where $s_{i}$ denotes confidence measure of the $i$th observation/measurement.
For the stability analysis, let us redefine the expression in \eqref{eq:Attit_Vec_yi}
as ${\bf y}_{i}=R^{\top}{\bf r}_{i}$. Consequently, one finds
\begin{align}
	M_{y}\tilde{R}_{o} & =\sum_{i=1}^{N}s_{i}{\bf y}_{i}{\bf y}_{i}^{\top}R^{\top}\hat{R}=\sum_{i=1}^{N}s_{i}{\bf y}_{i}\hat{{\bf y}}_{i}^{\top}\label{eq:Attit_ReMy}
\end{align}
From \eqref{eq:Attit_VEX} and \eqref{eq:Attit_ReMy}, one shows
\begin{align}
	\boldsymbol{\Upsilon}(M_{y}\tilde{R}_{o}) & =\frac{1}{2}\mathbf{vex}(M_{y}\tilde{R}_{o}-\tilde{R}_{o}^{\top}M_{y}^{\top})\nonumber \\
	& =\frac{1}{2}\mathbf{vex}(\sum_{i=1}^{N}{\bf y}_{i}\hat{{\bf y}}_{i}^{\top}-\sum_{i=1}^{N}\hat{{\bf y}}_{i}{\bf y}_{i}^{\top})\nonumber \\
	& =\sum_{i=1}^{n}\frac{s_{i}}{2}\hat{{\bf y}}_{i}\times{\bf y}_{i}\label{eq:Attit_VEX_ReMy}
\end{align}
Likewise, for $||M_{y}\tilde{R}_{o}||_{{\rm I}}=\frac{1}{4}{\rm Tr}\{M_{y}-M_{y}\tilde{R}_{o}\}$
and in view of \eqref{eq:Attit_ReMy}, one obtains
\begin{align}
	||M_{y}\tilde{R}_{o}||_{{\rm I}} & =\frac{1}{4}{\rm Tr}\{\sum_{i=1}^{N}s_{i}{\bf y}_{i}{\bf y}_{i}^{\top}-\sum_{i=1}^{N}s_{i}{\bf y}_{i}\hat{{\bf y}}_{i}^{\top}\}\nonumber \\
	& =\frac{1}{4}{\rm Tr}\{\sum_{i=1}^{N}s_{i}{\bf y}_{i}({\bf y}_{i}-\hat{{\bf y}}_{i})^{\top}\}\label{eq:Attit_RMI_ReMy}
\end{align}

\subsection{Error Dynamics and NN Approximation}

From \eqref{eq:Attit_R_dot}, \eqref{eq:Attit_Wbe}, and \eqref{eq:NAV_Filter1_Detailed},
the error dynamics are as follows: 
\begin{align}
	d\tilde{R}_{o} & =R^{\top}d\hat{R}+dR^{\top}\hat{R}\nonumber \\
	& =(\tilde{R}_{o}[\Omega+\tilde{W}_{b}-C]_{\times}+[\Omega]_{\times}^{\top}\tilde{R}_{o})dt+\tilde{R}_{o}[\mathcal{Q}d\beta]_{\times}\nonumber \\
	& =\tilde{R}_{o}[\Omega]_{\times}-[\Omega]_{\times}\tilde{R}_{o}+\tilde{R}_{o}[\tilde{W}_{b}-C]_{\times}dt+\tilde{R}_{o}[\mathcal{Q}d\beta]_{\times}\label{eq:Attit_dRe}
\end{align}
One can show that
\begin{align*}
	\dot{M}_{y}= & R^{\top}M_{r}R[\Omega]_{\times}-[\Omega]_{\times}R^{\top}M_{r}R\\
	= & M_{y}[\Omega]_{\times}-[\Omega]_{\times}M_{y}
\end{align*}
Let us define $||M_{y}\tilde{R}_{o}||_{{\rm I}}=\frac{1}{4}{\rm Tr}\{M_{y}(\mathbf{I}_{3}-\tilde{R}_{o})\}$.
Based on \eqref{eq:Attit_Identity3} and \eqref{eq:Attit_dRe}, one
finds that the Euclidean distance of \eqref{eq:Attit_dRe} is as follows:
\begin{align}
	d||M_{y}\tilde{R}_{o}||_{{\rm I}}= & -\frac{1}{4}{\rm Tr}\{M_{y}d\tilde{R}_{o}\}\nonumber \\
	= & -\frac{1}{4}{\rm Tr}\{M_{y}\tilde{R}_{o}[(\tilde{W}_{b}-C)dt+\mathcal{Q}d\beta]_{\times}\}\label{eq:Attit_dRe_norm}
\end{align}
with ${\rm Tr}\{\dot{M}_{y}\}={\rm Tr}\{M_{y}[\Omega]_{\times}-[\Omega]_{\times}M_{y}\}=0$
and ${\rm Tr}\{M_{y}\tilde{R}_{o}[\Omega]_{\times}-[\Omega]_{\times}M_{y}\tilde{R}_{o}\}=0$.
In this work, linear in parameter structure of NNs is adopted. For
$x\in\mathbb{R}^{n}$ and a function $f=f(x)\in\mathbb{R}^{m}$, the
linear NN weights structure is approximated as follows:
\begin{equation}
	f=W^{\top}\varphi(x)+\alpha_{f}\label{eq:Attit_fx_NN}
\end{equation}
with $W\in\mathbb{R}^{q\times m}$ being a matrix of synaptic weights,
$\varphi(x)\in\mathbb{R}^{q}$ standing for an activation function,
and $\alpha_{f}\in\mathbb{R}^{m}$ describing an approximated error
vector. The activation function can have high-order connection elements,
for example, Gaussian functions, radial basis functions (RBFs) \cite{zhao2015intelligent},
and sigmoid functions \cite{hashim2022NNFilter,siniscalchi2016adaptation}. Our strategy
is to reach accurate attitude and gyro bias estimation as well as
attenuate gyro noise \textit{stochasticity} effect which, in turn,
will result in accurate estimation of nonlinear attitude dynamics.
NNs have the potential of successfully estimating high-order nonlinear
dynamics \cite{chai2019six,chen2017robust,zheng2017consensus}. Define
$\varphi(\boldsymbol{\Upsilon}_{o})=\varphi(\boldsymbol{\Upsilon}(M_{y}\tilde{R}_{o}))\in\mathbb{R}^{q\times1}$
as an activation function, and consider approximating
\begin{align*}
	\boldsymbol{\Upsilon}_{o}(\tilde{W}_{b}-C) & =(\tilde{W}_{b}-C)\Gamma_{b}^{\top}\varphi(\boldsymbol{\Upsilon}_{o})^{\top}+\alpha_{b}\\
	\mathcal{Q}\boldsymbol{\Upsilon}_{o} & =W_{\sigma}^{\top}\varphi(\boldsymbol{\Upsilon}_{o})+\alpha_{\sigma}
\end{align*}
with $q$ being a positive integer that stands for the number of neurons,
$\Gamma_{b}\in\mathbb{R}^{q\times3}$ referring to a constant matrix,
$W_{\sigma}\in\mathbb{R}^{q\times3}$ denoting unknown NN weighed
matrix to be later adaptively estimated and tuned, $C\in\mathbb{R}^{3\times1}$
being an innovation (correction) term, and $\alpha_{b}\in\mathbb{R}$
and $\alpha_{\sigma}\in\mathbb{R}^{3}$ being the approximated errors.
It is worth noting that $\alpha_{b}\rightarrow0$ and $||\alpha_{\sigma}||\rightarrow0$
as $q\rightarrow\infty$. In view of the nonlinear dynamics in \eqref{eq:Attit_dRe_norm}
and the expression in \eqref{eq:Attit_fx_NN}, one obtains
\begin{align}
	d||M_{y}\tilde{R}_{o}||_{{\rm I}}= & \underbrace{\frac{1}{2}(\varphi(\boldsymbol{\Upsilon}_{o})^{\top}\Gamma_{b}(\tilde{W}_{b}-C)+\alpha_{b})}_{f}dt\nonumber \\
	& +\underbrace{\frac{1}{2}(\varphi(\boldsymbol{\Upsilon}_{o})^{\top}W_{\sigma}+\alpha_{\sigma}^{\top})}_{g}d\beta\label{eq:Attit_dRe_norm_NN}
\end{align}
with $\varphi(\boldsymbol{\Upsilon}_{o})=\varphi(\boldsymbol{\Upsilon}(M_{y}\tilde{R}_{o}))\in\mathbb{R}^{q\times1}$
being an activation function, and $\alpha_{b}\in\mathbb{R}$ and $\alpha_{\sigma}\in\mathbb{R}^{3}$.
Define $\overline{W}_{\sigma}=W_{\sigma}W_{\sigma}^{\top}$ where
$\hat{W}_{\sigma}$ is the estimate of $\overline{W}_{\sigma}$. Let
the estimation error between $\overline{W}_{\sigma}$ and $\hat{W}_{\sigma}$
be defined as follows:
\begin{equation}
	\tilde{W}_{\sigma}=\overline{W}_{\sigma}-\hat{W}_{\sigma}\in\mathbb{R}^{q\times q}\label{eq:Attit_Wse}
\end{equation}

\subsection{Direct NN-based Stochastic Filter}

Consider the following direct real-time NN-based stochastic filter
design:
\begin{equation}
	\begin{cases}
		\dot{\hat{R}} & =\hat{R}[\Omega_{m}-\hat{W}_{b}-C]_{\times}\\
		C & =\left(\Gamma_{b}^{\top}\mathbf{I}_{3}+\frac{\varPsi_{2}}{4\varPsi_{1}}(\Gamma_{b}^{\top}\Gamma_{b})^{-1}\Gamma_{b}^{\top}\hat{W}_{\sigma}\right)\varphi(\boldsymbol{\Upsilon}_{o})\\
		\dot{\hat{W}}_{b} & =\gamma_{b}(\varPsi_{1}\Gamma_{b}\varphi(\boldsymbol{\Upsilon}_{o})-k_{ob}\hat{W}_{b})\\
		\dot{\hat{W}}_{\sigma} & =\frac{\varPsi_{2}}{4}\Gamma_{\sigma}\varphi(\boldsymbol{\Upsilon}_{o})\varphi(\boldsymbol{\Upsilon}_{o})^{\top}-k_{o\sigma}\Gamma_{\sigma}\hat{W}_{\sigma}
	\end{cases}\label{eq:NAV_Filter1_Detailed}
\end{equation}
where
\begin{equation}
	\begin{cases}
		||M_{y}\tilde{R}_{o}||_{{\rm I}} & =\frac{1}{4}{\rm Tr}\{\sum_{i=1}^{N}s_{i}{\bf y}_{i}({\bf y}_{i}-\hat{{\bf y}}_{i})^{\top}\}\\
		\boldsymbol{\Upsilon}_{o} & =\boldsymbol{\Upsilon}(M_{y}\tilde{R}_{o})=\sum_{i=1}^{N}\frac{s_{i}}{2}\hat{{\bf y}}_{i}\times{\bf y}_{i}\\
		\varPsi_{1} & =(||M_{y}\tilde{R}_{o}||_{{\rm I}}+1)\exp(||M_{y}\tilde{R}_{o}||_{{\rm I}})\\
		\varPsi_{2} & =(||M_{y}\tilde{R}_{o}||_{{\rm I}}+2)\exp(||M_{y}\tilde{R}_{o}||_{{\rm I}})
	\end{cases}\label{eq:NAV_Filter1_Auxillary}
\end{equation}
with $k_{ob},k_{o\sigma},\gamma_{b}\in\mathbb{R}$ being positive
constants, $\Gamma_{\sigma}\in\mathbb{R}^{q\times q}$ being a positive
diagonal matrix, $\Gamma_{b}\in\mathbb{R}^{q\times3}$ which is selected
such that $\Gamma_{b}^{\top}\Gamma_{b}$ is positive definite, $q$
denoting neurons number, and $\hat{W}_{b}\in\mathbb{R}^{3\times3}$
and $\hat{W}_{\sigma}\in\mathbb{R}^{q\times q}$ being the estimated
weights of $W_{b}$ and $W_{\sigma}$, respectively. $\varphi(\boldsymbol{\Upsilon}_{o})$
denotes an activation function. It is obvious that $\hat{W}_{\sigma}$
is symmetric provided that $\hat{W}_{\sigma}(0)=\hat{W}_{\sigma}(0)^{\top}$.
\begin{thm}
	\label{thm:Theorem1}Consider the stochastic system in \eqref{eq:Attit_dR_dt}.
	Assume that at least two observations and their respective measurements
	in \eqref{eq:Attit_Vec_yi} are available. Couple the NN-based stochastic
	filter in \eqref{eq:NAV_Filter1_Detailed} directly with the measurements
	in \eqref{eq:Attit_Om_m} ($\Omega_{m}=\Omega+W_{b}+n$) and \eqref{eq:Attit_Vec_yi}
	($y_{i}=R^{\top}r_{i}$ $\forall i=1,2,\ldots,N$). Let $||\tilde{R}_{o}(0)||_{{\rm I}}\neq+1$
	(unstable equilibria). Thus, the closed-loop error signals ($||\tilde{R}_{o}||_{{\rm I}}$,
	$\tilde{W}_{b}$, $\tilde{W}_{\sigma}$) are SGUUB in the mean square.
\end{thm}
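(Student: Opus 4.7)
The plan is to build one composite stochastic Lyapunov functional that simultaneously handles the attitude mismatch, the bias estimation error and the NN weight error, and then to verify the drift inequality of Lemma~\ref{Lemm:Def_LV_dot}. Writing $x_{o}:=\|M_{y}\tilde{R}_{o}\|_{{\rm I}}$, I would take
\[
\mathcal{U}=x_{o}\exp(x_{o})+\tfrac{1}{2\gamma_{b}}\tilde{W}_{b}^{\top}\tilde{W}_{b}+\tfrac{1}{4}{\rm Tr}\{\tilde{W}_{\sigma}\Gamma_{\sigma}^{-1}\tilde{W}_{\sigma}\}.
\]
The scalarization $x\mapsto x\exp(x)$ for the attitude piece is the key design choice: direct differentiation gives $\mathcal{U}_{1}=(x_{o}+1)\exp(x_{o})=\varPsi_{1}$ and $\mathcal{U}_{2}=(x_{o}+2)\exp(x_{o})=\varPsi_{2}$, exactly the two state-dependent gains that were hard-coded into \eqref{eq:NAV_Filter1_Detailed}--\eqref{eq:NAV_Filter1_Auxillary}. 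The $\mathcal{K}_{\infty}$ sandwich \eqref{eq:Attit_Vfunction_Lyap} is immediate since $x\mapsto x\exp(x)$ is strictly increasing on $[0,1]$ (the feasible range of $\|\cdot\|_{{\rm I}}$) and the two weight-error quadratics are positive definite in their arguments.

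Next, I would apply the generator \eqref{eq:Attit_Vfunction_Lyap0} to $\mathcal{U}$ using \eqref{eq:Attit_dRe_norm_NN} together with $\dot{\tilde{W}}_{b}=-\dot{\hat{W}}_{b}$ and $\dot{\tilde{W}}_{\sigma}=-\dot{\hat{W}}_{\sigma}$ from \eqref{eq:NAV_Filter1_Detailed}. The drift part produces $\tfrac{\varPsi_{1}}{2}\varphi(\boldsymbol{\Upsilon}_{o})^{\top}\Gamma_{b}(\tilde{W}_{b}-C)+\tfrac{\varPsi_{1}}{2}\alpha_{b}$, and the It\^{o} correction $\tfrac{1}{2}{\rm Tr}\{gg^{\top}\mathcal{U}_{2}\}$ produces $\tfrac{\varPsi_{2}}{8}(\varphi^{\top}\overline{W}_{\sigma}\varphi+2\varphi^{\top}W_{\sigma}\alpha_{\sigma}+\|\alpha_{\sigma}\|^{2})$ with $\overline{W}_{\sigma}=\hat{W}_{\sigma}+\tilde{W}_{\sigma}$. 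Three telescoping cancellations are then engineered into the filter: first, the $\Gamma_{b}^{\top}\varphi$ component of $C$ paired with the $\varPsi_{1}\Gamma_{b}^{\top}\varphi$ term of $\dot{\hat{W}}_{b}$ eliminates the bilinear $\tfrac{\varPsi_{1}}{2}\varphi^{\top}\Gamma_{b}\tilde{W}_{b}$ coupling, leaving only the $\sigma$-modification leakage $k_{ob}\tilde{W}_{b}^{\top}\hat{W}_{b}$; second, the $\tfrac{\varPsi_{2}}{4\varPsi_{1}}(\Gamma_{b}^{\top}\Gamma_{b})^{-1}\Gamma_{b}^{\top}\hat{W}_{\sigma}\varphi$ piece of $C$ is chosen so that $-\tfrac{\varPsi_{1}}{2}\varphi^{\top}\Gamma_{b}C$ matches the $\tfrac{\varPsi_{2}}{8}\varphi^{\top}\hat{W}_{\sigma}\varphi$ half of the It\^{o} term, up to the orthogonal-complement residual $\tfrac{\varPsi_{2}}{8}\varphi^{\top}(\mathbf{I}_{q}-\Pi)\hat{W}_{\sigma}\varphi$ with $\Pi=\Gamma_{b}(\Gamma_{b}^{\top}\Gamma_{b})^{-1}\Gamma_{b}^{\top}$; third, the $\tfrac{\varPsi_{2}}{4}\Gamma_{\sigma}\varphi\varphi^{\top}$ piece of $\dot{\hat{W}}_{\sigma}$ annihilates the remaining $\tfrac{\varPsi_{2}}{8}\varphi^{\top}\tilde{W}_{\sigma}\varphi$.

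What survives the cancellations splits into (a) the non-positive quadratic $-\tfrac{\varPsi_{1}}{2}\|\Gamma_{b}^{\top}\varphi\|^{2}\le 0$, (b) the $\sigma$-modification leakages $-k_{ob}\|\tilde{W}_{b}\|^{2}+k_{ob}\tilde{W}_{b}^{\top}W_{b}$ and $-\tfrac{k_{o\sigma}}{2}\|\tilde{W}_{\sigma}\|_{F}^{2}+\tfrac{k_{o\sigma}}{2}{\rm Tr}\{\tilde{W}_{\sigma}\overline{W}_{\sigma}\}$, and (c) the bounded NN-approximation residuals in $\alpha_{b},\alpha_{\sigma}$ plus the projector residual above. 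Young's inequality applied to each bilinear cross-term trades half of every $\|\tilde{W}_{\cdot}\|^{2}$ against a constant depending only on the bounded true parameters $W_{b}$, $\overline{W}_{\sigma}$, the activation-function bound $\sup\|\varphi\|$, and $\sup\|\alpha_{\cdot}\|$; invoking Lemma~\ref{lem:Lem_VEX_RMI} with $M=M_{y}$ then dominates $x_{o}\exp(x_{o})$ from above by a constant multiple of $\|\boldsymbol{\Upsilon}(M_{y}\tilde{R}_{o})\|^{2}$ on any sublevel set of $\mathcal{U}$. Collecting, one arrives at the drift inequality $\mathcal{L}\mathcal{U}\le -\mu_{1}\mathcal{U}+\mu_{2}$ required by Lemma~\ref{Lemm:Def_LV_dot}, which delivers $\mathbb{E}[\mathcal{U}(t)]\le\mathcal{U}(0)\exp(-\mu_{1}t)+\mu_{2}/\mu_{1}$ and hence SGUUB in mean square of $\|\tilde{R}_{o}\|_{{\rm I}}$, $\tilde{W}_{b}$ and $\tilde{W}_{\sigma}$.

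The main obstacle I expect is the book-keeping in the second paragraph: the three-way cancellation among the innovation $C$, the bias adaptation and the It\^{o} correction is precisely what motivates the unusual form of $C$, and the projector-complement residual $\tfrac{\varPsi_{2}}{8}\varphi^{\top}(\mathbf{I}_{q}-\Pi)\hat{W}_{\sigma}\varphi$ must be absorbed into $\mu_{2}$ through an a-priori bound on $\|\hat{W}_{\sigma}\|_{F}=\|\overline{W}_{\sigma}-\tilde{W}_{\sigma}\|_{F}$, which requires the usual careful argument to avoid circular reasoning. The hypothesis $\|\tilde{R}_{o}(0)\|_{{\rm I}}\neq 1$ enters exactly where Lemma~\ref{lem:Lem_VEX_RMI} is invoked, since its lower bound degenerates at the antipodal unstable equilibrium and one must keep the trajectory bounded away from it.
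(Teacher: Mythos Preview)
Your approach is essentially identical to the paper's: the same Lyapunov candidate (the paper uses $\tfrac{1}{2}x_{o}\exp(x_{o})+\tfrac{1}{2\gamma_{b}}\|\tilde W_{b}\|^{2}+\tfrac{1}{2}{\rm Tr}\{\tilde W_{\sigma}^{\top}\Gamma_{\sigma}^{-1}\tilde W_{\sigma}\}$, differing from yours only by harmless constant factors), the same three-way cancellation among $C$, $\dot{\hat W}_{b}$ and $\dot{\hat W}_{\sigma}$, the same Young-inequality absorption of the $\sigma$-modification cross-terms, and the same invocation of Lemma~\ref{lem:Lem_VEX_RMI} together with a tanh lower bound to convert $-\|\Gamma_{b}^{\top}\varphi\|^{2}$ into $-c\,\|\tilde R_{o}\|_{{\rm I}}$ and close the drift inequality of Lemma~\ref{Lemm:Def_LV_dot}. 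Your explicit identification of the projector-complement residual $\tfrac{\varPsi_{2}}{8}\varphi^{\top}(\mathbf{I}_{q}-\Pi)\hat W_{\sigma}\varphi$ is in fact more careful than the paper, which silently drops this term in passing from \eqref{eq:Attit_Lyap_2} to \eqref{eq:Attit_Lyap_3}; the residual vanishes only when $q=3$ and $\Gamma_{b}$ is invertible (as in the paper's simulations), so your remark about needing an a-priori bound on $\hat W_{\sigma}$ for general $q$ is well taken.
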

\begin{proof}Define a Lyapunov function candidate $\mathcal{U}_{o}=\mathcal{U}_{o}(||M_{y}\tilde{R}_{o}||_{{\rm I}},\tilde{W}_{b},\tilde{W}_{\sigma})$
	such that
	\begin{align}
		\mathcal{U}_{o}= & \frac{1}{2}\exp(||M_{y}\tilde{R}_{o}||_{{\rm I}})||M_{y}\tilde{R}_{o}||_{{\rm I}}+\frac{1}{2\gamma_{b}}\tilde{W}_{b}^{\top}\tilde{W}_{b}\nonumber \\
		& +\frac{1}{2}{\rm Tr}\{\tilde{W}_{\sigma}^{\top}\Gamma_{\sigma}^{-1}\tilde{W}_{\sigma}\}\label{eq:Attit_Lyap}
	\end{align}
	where $\mathcal{U}_{o}:\mathbb{SO}\left(3\right)\times\mathbb{R}^{3}\times\mathbb{R}^{q\times q}\rightarrow\mathbb{R}_{+}$.
	Define $\overline{M}_{y}={\rm Tr}\{M_{y}\}\mathbf{I}_{3}-M_{y}$ and
	recall Lemma \ref{lem:Lem_VEX_RMI}. Since $1\leq\exp(||\tilde{R}_{o}||_{{\rm I}})<3$
	\cite{hashim2019AtiitudeSurvey}, let us define $\underline{\delta}=\inf_{t\geq0}\frac{\underline{\lambda}_{\overline{M}}}{4}$
	where $\inf$ denotes the infimum and $\overline{\delta}=\sup_{t\geq0}\exp(\frac{\overline{\lambda}_{\overline{M}}}{2})$
	with $\sup$ representing the supremum. Hence, one obtains
	\begin{align*}
		e_{o}^{\top}H_{1}e_{o} & \leq\mathcal{U}_{o}\leq e_{o}^{\top}H_{2}e_{o}
	\end{align*}
	where
	\[
	\underline{\lambda}(H_{1})||e_{o}||^{2}\leq\mathcal{U}_{o}\leq\overline{\lambda}(H_{2})||e_{o}||^{2}
	\]
	with $H_{1}=diag(\underline{\delta}^{2},\frac{\gamma_{b}}{2},\frac{1}{2}\underline{\lambda}(\Gamma_{\sigma}^{-1}))$,
	$H_{2}=diag(3\overline{\delta}^{2},\frac{\gamma_{b}}{2},\frac{1}{2}\overline{\lambda}(\Gamma_{\sigma}^{-1}))$,
	and $e_{o}=[\sqrt{||\tilde{R}_{o}||_{{\rm I}}},||\tilde{W}_{b}||,||\tilde{W}_{\sigma}||_{F}]^{\top}$.
	Owing to the fact that $\underline{\delta},\overline{\delta},\gamma_{b},\underline{\lambda}(\Gamma_{\sigma}^{-1}),\overline{\lambda}(\Gamma_{\sigma}^{-1})>0$,
	it becomes apparent that $\underline{\lambda}(H_{1})>0$ and $\overline{\lambda}(H_{2})>0$
	which indicates that $\mathcal{U}_{o}>0$ for all $e_{o}\in\mathbb{R}^{3}\backslash\{0\}$.
	One is able to show that the first $\frac{\partial\mathcal{U}_{o}}{\partial||M_{y}\tilde{R}_{o}||_{{\rm I}}}=\frac{\varPsi_{1}}{2}$
	and second $\frac{\partial^{2}\mathcal{U}_{o}}{\partial||M_{y}\tilde{R}_{o}||_{{\rm I}}^{2}}=\frac{\varPsi_{2}}{2}$
	partial derivatives of $\mathcal{U}_{o}$ relative to $||M_{y}\tilde{R}_{o}||_{{\rm I}}$
	are as follows:
	\begin{equation}
		\begin{cases}
			\varPsi_{1} & =(1+||M_{y}\tilde{R}_{o}||_{{\rm I}})\exp(||M_{y}\tilde{R}_{o}||_{{\rm I}})\\
			\varPsi_{2} & =(2+||M_{y}\tilde{R}_{o}||_{{\rm I}})\exp(||M_{y}\tilde{R}_{o}||_{{\rm I}})
		\end{cases}\label{eq:Attit_Lvv}
	\end{equation}
	Therefore, in view of \eqref{eq:Attit_dRe_norm_NN}, \eqref{eq:Attit_Lyap},
	\eqref{eq:Attit_Lvv}, and Lemma \ref{Lemm:Def_LV_dot}, one obtains
	th following differential operator:
	\begin{align}
		\mathcal{L}\mathcal{U}_{o}= & \varPsi_{1}f+\frac{1}{2}{\rm Tr}\{gg^{\top}\varPsi_{2}\}-\frac{1}{\gamma_{b}}\tilde{W}_{b}^{\top}\dot{\hat{W}}_{b}\nonumber \\
		& -{\rm Tr}\{\tilde{W}_{\sigma}^{\top}\Gamma_{\sigma}^{-1}\dot{\hat{W}}_{\sigma}\}\label{eq:Attit_Lyap_1}
	\end{align}
	From \eqref{eq:NAV_Filter1_Detailed} and \eqref{eq:Attit_Lyap_1},
	one finds
	\begin{align}
		\mathcal{L}\mathcal{U}_{o}= & \varPsi_{1}(\varphi(\boldsymbol{\Upsilon}_{o})^{\top}\Gamma_{b}(\tilde{W}_{b}-C)+\alpha_{b})\nonumber \\
		& +\frac{\varPsi_{2}}{8}{\rm Tr}\{(W_{\sigma}^{\top}\varphi(\boldsymbol{\Upsilon}_{o})+\alpha_{\sigma})(W_{\sigma}^{\top}\varphi(\boldsymbol{\Upsilon}_{o})+\alpha_{\sigma})^{\top}\}\nonumber \\
		& -\frac{1}{\gamma_{b}}\tilde{W}_{b}^{\top}\dot{\hat{W}}_{b}-{\rm Tr}\{\tilde{W}_{\sigma}^{\top}\Gamma_{\sigma}^{-1}\dot{\hat{W}}_{\sigma}\}\nonumber \\
		\leq & \varPsi_{1}\varphi(\boldsymbol{\Upsilon}_{o})^{\top}\Gamma_{b}(\tilde{W}_{b}-C)\nonumber \\
		& +\frac{\varPsi_{2}}{4}{\rm Tr}\{\overline{W}_{\sigma}\varphi(\boldsymbol{\Upsilon}_{o})\varphi(\boldsymbol{\Upsilon}_{o})^{\top}\}-\frac{1}{\gamma_{b}}\tilde{W}_{b}^{\top}\dot{\hat{W}}_{b}\nonumber \\
		& -{\rm Tr}\{\tilde{W}_{\sigma}^{\top}\Gamma_{\sigma}^{-1}\dot{\hat{W}}_{\sigma}\}+\varPsi_{1}\alpha_{b}+\frac{\varPsi_{2}}{4}||\alpha_{\sigma}||^{2}\label{eq:Attit_Lyap_2}
	\end{align}
	where \textit{Young's inequality }has been applied to the following
	expression: $\alpha_{\sigma}^{\top}W_{\sigma}^{\top}\varphi(\boldsymbol{\Upsilon}_{o})=\frac{1}{2}\varphi(\boldsymbol{\Upsilon}_{o})^{\top}\overline{W}_{\sigma}\varphi(\boldsymbol{\Upsilon}_{o})+\frac{1}{2}||\alpha_{\sigma}||^{2}$.
	Let us define $\epsilon_{1}=\sup_{t\geq0}\varPsi_{1}$ and $\epsilon_{2}=\sup_{t\geq0}\varPsi_{2}$.
	From \eqref{eq:Attit_Wbe} and \eqref{eq:Attit_Wse}, substitute $\overline{W}_{\sigma}$
	in \eqref{eq:NAV_Filter1_Detailed} for $\overline{W}_{\sigma}=\tilde{W}_{\sigma}+\hat{W}_{\sigma}$.
	Thereby, utilizing $\dot{\hat{W}}_{b}$, $\dot{\hat{W}}_{\sigma}$,
	and $C$ definitions in \eqref{eq:NAV_Filter1_Detailed}, the result
	in \eqref{eq:Attit_Lyap_2} can be rewritten as follows:
	\begin{align}
		\mathcal{L}\mathcal{U}_{o}\leq & -||\Gamma_{b}^{\top}\varphi(\boldsymbol{\Upsilon}_{o})||^{2}-k_{bo}||\tilde{W}_{b}||^{2}+k_{bo}||\tilde{W}_{b}||\,||W_{b}||\nonumber \\
		& -k_{\sigma o}||\tilde{W}_{\sigma}||_{F}^{2}+k_{\sigma o}||\tilde{W}_{\sigma}||_{F}||W_{\sigma}||_{F}+\epsilon_{1}\alpha_{b}\nonumber \\
		& +\frac{\epsilon_{2}}{4}||\alpha_{\sigma}||^{2}\label{eq:Attit_Lyap_3}
	\end{align}
	On the basis of \textit{Young's inequality} $||\tilde{W}_{b}||\,||W_{b}||\leq\frac{1}{2}||\tilde{W}_{b}||^{2}+\frac{1}{2}||W_{b}||^{2}$
	and $||\tilde{W}_{\sigma}||_{F}||W_{\sigma}||_{F}\leq\frac{1}{2}||\tilde{W}_{\sigma}||_{F}^{2}+||W_{\sigma}||_{F}^{2}$.
	Let us select a hyperbolic tangent activation function $\varphi(\alpha)=\frac{\exp(\alpha)-\exp(-\alpha)}{\exp(\alpha)+\exp(-\alpha)}$
	with $\alpha\in\mathbb{R}$. It becomes apparent that $4||\Gamma_{b}^{\top}\varphi(\boldsymbol{\Upsilon}(\tilde{R}))||^{2}\geq k_{co}||\boldsymbol{\Upsilon}(\tilde{R})||^{2}$
	where $k_{co}=\underline{\lambda}(\Gamma_{b}^{\top}\Gamma_{b})$.
	As such, one finds
	\begin{align}
		\mathcal{L}\mathcal{U}_{o}\leq & -\frac{k_{co}}{4}||\boldsymbol{\Upsilon}_{o}||^{2}-\frac{k_{bo}}{2}||\tilde{W}_{b}||^{2}-\frac{k_{\sigma o}}{2}||\tilde{W}_{\sigma}||_{F}^{2}+\eta_{o}\label{eq:Attit_Lyap_4-1}
	\end{align}
	where $\eta_{o}=\sup_{t\geq0}\frac{k_{bo}}{2}||W_{b}||^{2}+\frac{k_{\sigma o}}{2}||W_{\sigma}||_{F}^{2}+\epsilon_{1}\alpha_{b}+\frac{\epsilon_{2}}{4}||\alpha_{\sigma}||^{2}$.
	Let $\underline{\delta}>1-||\tilde{R}(0)||_{{\rm I}}$ and consider
	Lemma \ref{lem:Lem_VEX_RMI}. Thereby, one obtains
	\begin{align}
		\mathcal{L}\mathcal{U}_{o}\leq & -\frac{\underline{\delta}\,k_{co}}{2}||\tilde{R}_{o}||_{{\rm I}}-\frac{k_{bo}}{2}||\tilde{W}_{b}||^{2}-\frac{k_{\sigma o}}{2}||\tilde{W}_{\sigma}||_{F}^{2}+\eta_{o}\label{eq:Attit_Lyap_4}
	\end{align}
	where $\underline{\delta}=\inf_{t\geq0}\frac{\underline{\lambda}_{\overline{M}}}{4}$.
	As a result, one obtains
	\begin{align}
		\mathcal{L}V\leq & -e_{o}^{\top}\underbrace{\left[\begin{array}{ccc}
				\frac{\underline{\delta}\,k_{co}}{2} & 0 & 0\\
				0 & \frac{k_{bo}}{2} & 0\\
				0 & 0 & \frac{k_{\sigma o}}{2}
			\end{array}\right]}_{H_{3}}e_{o}+\eta_{o}\nonumber \\
		\leq & -\underline{\lambda}(H_{3})||e_{o}||^{2}+\eta_{o}\label{eq:Attit_Lyap_6}
	\end{align}
	with $e_{o}=[\sqrt{||\tilde{R}_{o}||_{{\rm I}}},||\tilde{W}_{b}||,||\tilde{W}_{\sigma}||_{F}]^{\top}$.
	Since $k_{co}>0$, $k_{bo}>0$, and $k_{\sigma o}>0$, it becomes
	obvious that $\underline{\lambda}(H_{3})>0$. Thus, $\mathcal{L}\mathcal{U}_{o}<0$
	if
	\[
	||e_{o}||^{2}>\frac{\eta_{o}}{\underline{\lambda}(H_{3})}
	\]
	Hence, one has
	\begin{equation}
		\frac{d\mathbb{E}[\mathcal{U}_{o}]}{dt}=\mathbb{E}[\mathcal{L}\mathcal{U}_{o}]\leq-\frac{\underline{\lambda}(H_{3})}{\overline{\lambda}(H_{2})}\mathbb{E}[\mathcal{U}_{o}]+\eta_{o}\label{eq:Attit_Lyap7}
	\end{equation}
	Hence, it can be concluded that $e_{o}$ is almost SGUUB completing
	the proof.\end{proof}

\section{Filter-based Controller for Attitude Tracking\label{sec:Controller}}

In this Section, our objective is to design control laws for the attitude
tracking problem reliant on the attitude estimate and direct onboard
measurements such that 1) measurement uncertainties are accounted
for, 2) unknown disturbances are rejected, and 3) overall stability
(interconnection between controller and estimator) is guaranteed.
Define the desired attitude as $R_{d}\in\mathbb{SO}(3)$ and the desired
angular velocity as $\Omega_{d}\in\mathbb{R}^{3}$. Let the error
between the true and the desired attitude be
\begin{equation}
	\tilde{R}_{c}=RR_{d}^{\top}\in\mathbb{SO}(3)\label{eq:Attit_Rec}
\end{equation}
Let the error between the true and the desired angular velocity be
\begin{align}
	\tilde{\Omega}_{c} & =R_{d}^{\top}(\Omega_{d}-\Omega)\in\mathbb{R}^{3}\label{eq:Attit_Om_ec}
\end{align}
From \eqref{eq:Attit_R_dot}, let $d$ be an unknown disturbance attached
to the control input and define $\hat{d}$ as the estimate of $d$.
Let the error between $\hat{d}$ and $d$ be
\begin{equation}
	\tilde{d}=d-\hat{d}\label{eq:Attit_de}
\end{equation}

\begin{assum}\label{Assum:OmD}The desired angular velocity is smooth,
	continuous, and uniformly upper-bounded by a scalar $\gamma_{\Omega}<\infty$
	with $\gamma_{\Omega}\geq\max\{\sup_{t\geq0}||\Omega_{d}||,\sup_{t\geq0}||\dot{\Omega}_{d}||\}$.
	Also, the unknown disturbances are uniformly upper-bounded by a scalar
	$||d||\leq\gamma_{d}<\infty$.\end{assum}

The desired attitude dynamics are given by
\begin{equation}
	\dot{R}_{d}=R_{d}[\Omega_{d}]_{\times}\label{eq:Attit_Rd_dot}
\end{equation}
From \eqref{eq:Attit_Mr_My}, $M_{r}\tilde{R}_{c}=M_{r}RR_{d}^{\top}=\sum_{i=1}^{N}s_{i}{\bf r}_{i}{\bf r}_{i}^{\top}RR_{d}^{\top}$.
Thus, one is able to show that
\begin{align}
	M_{r}\tilde{R}_{c} & =\sum_{i=1}^{N}s_{i}{\bf r}_{i}{\bf y}_{i}^{\top}R_{d}^{\top}\label{eq:Attit_MRec}
\end{align}
with $s_{i}$ denoting the sensor trust level of the $i$th measurement.

\subsection{Control Law from Direct Measurements and Estimated States}

Consider the following control law design:
\begin{equation}
	\begin{cases}
		\boldsymbol{\Upsilon}_{c} & =\boldsymbol{\Upsilon}(M_{r}\tilde{R}_{c})=\sum_{i=1}^{N}s_{i}R_{d}{\bf y}_{i}\times{\bf r}_{i}\\
		\mathcal{T} & =J\dot{\Omega}_{d}-[J(\Omega_{m}-\hat{W}_{b})]_{\times}\Omega_{d}-\hat{d}-w_{c}\\
		w_{c} & =k_{c1}R_{d}^{\top}\boldsymbol{\Upsilon}_{c}+k_{c2}(\Omega_{m}-\hat{W}_{b}-\Omega_{d})\\
		\dot{\hat{d}} & =\frac{k_{d}}{k_{c1}}(\Omega_{m}-\hat{W}_{b}-\Omega_{d})-\gamma_{d}k_{d}\hat{d}
	\end{cases}\label{eq:Attit_Cont_Law}
\end{equation}
where $k_{c1}$, $k_{c2}$, $k_{d}$, and $\gamma_{d}$ are positive
constants, $\hat{W}_{b}$ is the estimate of $W_{b}$, $\hat{d}$
is the estimate of $d$, and $w_{c}$ is an innovation term.
\begin{thm}
	\label{thm:Theorem2}Consider the dynamics in \eqref{eq:Attit_dR_dt}
	with the rotational torque $\mathcal{T}$ being defined as in \eqref{eq:Attit_Cont_Law}.
	Let the control low in \eqref{eq:Attit_Cont_Law} be coupled with
	the filter design in \eqref{eq:NAV_Filter1_Detailed} such that the
	attitude and the unknown weighted bias are estimated using the filter
	design in \eqref{eq:NAV_Filter1_Detailed}. Then, all the closed-loop
	error signals of the filter-based controller ($||\tilde{R}_{o}||_{{\rm I}}$,
	$\tilde{W}_{b}$, $\tilde{W}_{\sigma}$, $||\tilde{R}_{c}||_{{\rm I}}$,
	$\tilde{\Omega}_{c}$, $\tilde{d}$) are SGUUB%
	.
\end{thm}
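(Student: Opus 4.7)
The plan is to extend the analysis of Theorem \ref{thm:Theorem1} by augmenting its filter Lyapunov function $\mathcal{U}_o$ with terms penalizing the new controller error signals. A natural composite candidate is
\[
\mathcal{V}_c = \mathcal{U}_o + k_{c1}\|M_r\tilde{R}_c\|_{\rm I} + \tfrac{1}{2}\tilde{\Omega}_c^\top J\tilde{\Omega}_c + \tfrac{k_{c1}}{2 k_d}\tilde{d}^\top\tilde{d},
\]
where the $k_{c1}$-weighting of the attitude and disturbance-error terms is chosen so that the cross terms generated by $w_c$ and by $\dot{\hat{d}}$ in $\mathcal{L}\mathcal{V}_c$ cancel against their counterparts coming from $\mathcal{L}\|M_r\tilde{R}_c\|_{\rm I}$ and $\tilde{\Omega}_c^\top J\dot{\tilde{\Omega}}_c$.

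Next I would derive the closed-loop tracking error dynamics. From $\tilde{R}_c = RR_d^\top$, together with $\dot{R}_d^\top = -[\Omega_d]_\times R_d^\top$ and identity \eqref{eq:Attit_Identity2}, one obtains $\dot{\tilde{R}}_c = -\tilde{R}_c[R_d\tilde{\Omega}_c]_\times$, which via \eqref{eq:Attit_Identity3} gives $\tfrac{d}{dt}\|M_r\tilde{R}_c\|_{\rm I} = -\tfrac{1}{2}\boldsymbol{\Upsilon}_c^\top R_d\tilde{\Omega}_c$. Substituting $\Omega_m-\hat{W}_b = \Omega + \tilde{W}_b + n$ into the control law \eqref{eq:Attit_Cont_Law} and applying the rigid-body equation \eqref{eq:Attit_R_dot}, the feedforward $J\dot{\Omega}_d$ and gyroscopic $[J\Omega]_\times\Omega$ contributions cancel, leaving
\[
J\dot{\tilde{\Omega}}_c = -[J\Omega_d]_\times\tilde{\Omega}_c + k_{c1}R_d^\top\boldsymbol{\Upsilon}_c + k_{c2}R_d\tilde{\Omega}_c + \tilde{d} + \rho(\tilde{W}_b,n),
\]
where $\rho$ collects perturbations of the form $[J(\tilde{W}_b+n)]_\times\Omega_d$ and $-k_{c2}(\tilde{W}_b+n)$; analogously $\dot{\tilde{d}} = \tfrac{k_d}{k_{c1}}R_d\tilde{\Omega}_c - \tfrac{k_d}{k_{c1}}(\tilde{W}_b+n) + \gamma_d k_d\hat{d}$.

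Applying the differential operator of Lemma \ref{Lemm:Def_LV_dot} to $\mathcal{V}_c$, I would then verify three structural cancellations: the pair $\boldsymbol{\Upsilon}_c^\top R_d\tilde{\Omega}_c$ arising from $\|M_r\tilde{R}_c\|_{\rm I}$ and from the $k_{c1}$ feedback in $w_c$; the pair $\tilde{\Omega}_c^\top\tilde{d}$ arising from $J\dot{\tilde{\Omega}}_c$ and from $\dot{\hat{d}}$; and the leakage $\gamma_d k_d\tilde{d}^\top\hat{d}$ handled by Young's inequality through $\hat{d} = d-\tilde{d}$. Combining these with the already-negative contribution of $\mathcal{L}\mathcal{U}_o$ from Theorem \ref{thm:Theorem1}, with Lemma \ref{lem:Lem_VEX_RMI} applied to both $\boldsymbol{\Upsilon}(M_y\tilde{R}_o)$ and $\boldsymbol{\Upsilon}_c$, and with the bounds from Assumption \ref{Assum:OmD}, I expect to reach
\[
\mathcal{L}\mathcal{V}_c \leq -\underline{\lambda}(H_4)\|e_c\|^2 + \eta_c,
\]
with $e_c = [\sqrt{\|\tilde{R}_o\|_{\rm I}},\|\tilde{W}_b\|,\|\tilde{W}_\sigma\|_F,\sqrt{\|\tilde{R}_c\|_{\rm I}},\|\tilde{\Omega}_c\|,\|\tilde{d}\|]^\top$ and $\eta_c$ a constant aggregating the noise covariance, the NN residuals $\alpha_b,\alpha_\sigma$, the bound $\gamma_d$, and the feedforward bound $\gamma_\Omega$. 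SGUUB of all six closed-loop error signals then follows by the same argument that produced \eqref{eq:Attit_Lyap7}.

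The main difficulty is the filter/controller coupling: $\tilde{W}_b$ feeds into the $\tilde{\Omega}_c$ dynamics both linearly through $w_c$ and bilinearly through the gyroscopic correction $[J(\Omega_m-\hat{W}_b)]_\times\Omega_d$, and this same perturbation propagates into $\dot{\hat{d}}$. Ensuring that the Young-type bounds on these mixed terms do not exceed the negative contributions will force $k_{c2}, k_d, \gamma_d$ to be chosen large enough relative to $\gamma_\Omega$, $\overline{\lambda}(J)$, and the filter's ultimate bounds, while simultaneously not eroding the filter gain margin that made $H_3$ positive definite in Theorem \ref{thm:Theorem1}. A secondary subtlety is that, because $\Omega_m$ enters $\mathcal{T}$, the Brownian noise $d\beta$ propagates into the $\tilde{\Omega}_c$ equation, so the Itô correction must be tracked in $\mathcal{L}\mathcal{V}_c$ exactly as in the filter part, and the final SGUUB statement is naturally read in the mean-square sense.
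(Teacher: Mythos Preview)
Your overall architecture matches the paper's, but there is a genuine gap in the Lyapunov candidate: the function
\[
\mathcal{V}_c = \mathcal{U}_o + k_{c1}\|M_r\tilde{R}_c\|_{\rm I} + \tfrac{1}{2}\tilde{\Omega}_c^\top J\tilde{\Omega}_c + \tfrac{k_{c1}}{2 k_d}\tilde{d}^\top\tilde{d}
\]
is \emph{not strict} in $\|\tilde{R}_c\|_{\rm I}$. The ``structural cancellation'' you correctly identify between the $\boldsymbol{\Upsilon}_c^\top R_d\tilde{\Omega}_c$ term produced by $\tfrac{d}{dt}\|M_r\tilde{R}_c\|_{\rm I}$ and the $k_{c1}R_d^\top\boldsymbol{\Upsilon}_c$ feedback in $w_c$ eliminates the \emph{only} place where $\|\tilde{R}_c\|_{\rm I}$ could appear in $\mathcal{L}\mathcal{V}_c$. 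After that cancellation you are left with negative terms in $\|\tilde{\Omega}_c\|$, $\|\tilde{d}\|$ and the filter errors, but nothing in $\|\tilde{R}_c\|_{\rm I}$; hence the claimed bound $\mathcal{L}\mathcal{V}_c\leq -\underline{\lambda}(H_4)\|e_c\|^2+\eta_c$ with $\sqrt{\|\tilde{R}_c\|_{\rm I}}$ inside $e_c$ cannot be reached, and SGUUB of $\|\tilde{R}_c\|_{\rm I}$ does not follow from your inequality.

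The paper's remedy is precisely a cross term: its controller Lyapunov function is
\[
\mathcal{U}_c = 2\|M_r\tilde{R}_c\|_{\rm I} + \tfrac{1}{2k_{c1}}\tilde{\Omega}_c^\top R_d^\top J R_d\tilde{\Omega}_c + \tfrac{1}{\delta_{c1}}\boldsymbol{\Upsilon}(\tilde{R}_c)^\top\tilde{\Omega}_c + \tfrac{1}{2k_d}\tilde{d}^\top\tilde{d}.
\]
Differentiating $\tfrac{1}{\delta_{c1}}\boldsymbol{\Upsilon}(\tilde{R}_c)^\top\tilde{\Omega}_c$ and substituting the closed-loop $\dot{\tilde{\Omega}}_c$ (which contains $-k_{c1}R_dJ^{-1}R_d^\top\boldsymbol{\Upsilon}(M_r\tilde{R}_c)$) manufactures the missing damping $-\tfrac{k_{c1}c_{c2}}{\delta_{c1}}\|\tilde{R}_c\|_{\rm I}$, at the price of extra indefinite terms that are then absorbed by the $\|\tilde{\Omega}_c\|^2$ and $\|\tilde{d}\|^2$ budgets. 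Because the cross term is sign-indefinite, one must also re-verify positive definiteness of $\mathcal{U}_c$ itself; the paper does this via two matrices $H_4,H_5$ and the condition $\underline{\lambda}_{\overline{M}_r}>\sqrt{k_{c1}/(2\delta_{c1}^2)}$. Your coupling and Young-inequality program for $\tilde{W}_b$ and the noise is otherwise on the right track, but it only becomes workable once this cross term (or some equivalent strictification device) is added.
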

\begin{proof}Recall the attitude error in \eqref{eq:Attit_Rec},
	the attitude dynamics in \eqref{eq:Attit_R_dot}, and the desired
	attitude dynamics in \eqref{eq:Attit_Rd_dot}. Accordingly, one finds
	the attitude error dynamics as below:
	\begin{align}
		\dot{\tilde{R}}_{c} & =R[\Omega]_{\times}R_{d}^{\top}-R[\Omega_{d}]_{\times}R_{d}^{\top}\nonumber \\
		& =\tilde{R}_{c}[R_{d}(\Omega-\Omega_{d})]_{\times}=\tilde{R}_{c}[\tilde{\Omega}_{c}]_{\times}\label{eq:Attit_Rec_dot}
	\end{align}
	with $\tilde{\Omega}_{c}=R_{d}(\Omega-\Omega_{d})$. Defining $||M_{r}\tilde{R}_{c}||_{{\rm I}}=\frac{1}{4}{\rm Tr}\{M_{r}(\mathbf{I}_{3}-\tilde{R}_{c})\}$
	and recalling \eqref{eq:Attit_dR_dt} and \eqref{eq:Attit_dR_norm},
	one finds
	\begin{align}
		\frac{d}{dt}||M_{r}\tilde{R}_{c}||_{{\rm I}}=\frac{1}{2} & \boldsymbol{\Upsilon}(M_{r}\tilde{R}_{c})^{\top}\tilde{\Omega}_{c}\label{eq:Attit_RMI_Rec_dot}
	\end{align}
	In view of \eqref{eq:Attit_Om_ec}, \eqref{eq:Attit_Cont_Law}, and
	\eqref{eq:Attit_Rec_dot}, one obtains
	\begin{align}
		\frac{d}{dt}JR_{d}^{\top}\tilde{\Omega}_{c}= & [J\Omega]_{\times}R_{d}^{\top}\tilde{\Omega}_{c}+[\Omega_{d}]_{\times}J\tilde{W}_{b}-w_{c}+\tilde{d}\nonumber \\
		& -[Jn]_{\times}\Omega_{d}\label{eq:Attit_Omec_dot}
	\end{align}
	It becomes apparent that
	\begin{align}
		\frac{d}{dt}\tilde{\Omega}_{c}= & -R_{d}(J^{-1}[J\Omega]_{\times}+k_{c2}J^{-1}+[\Omega]_{\times})R_{d}^{\top}\tilde{\Omega}_{c}\nonumber \\
		& +R_{d}^{\top}J^{-1}(k_{c2}\mathbf{I}_{3}-[\Omega_{d}]_{\times}J)\tilde{W}_{b}\nonumber \\
		& -k_{c1}R_{d}J^{-1}R_{d}^{\top}\boldsymbol{\Upsilon}(M_{r}\tilde{R}_{c})+R_{d}J^{-1}\tilde{d}\label{eq:Attit_Omec_dot1}
	\end{align}
	In addition, one finds
	\begin{align}
		\boldsymbol{\Upsilon}(\dot{\tilde{R}}_{c}) & =\frac{1}{2}({\rm Tr}\{\tilde{R}_{c}\}\mathbf{I}_{3}-\tilde{R}_{c})^{\top}\tilde{\Omega}_{c}=\frac{1}{2}\Psi(\tilde{R}_{c})\tilde{\Omega}_{c}\label{eq:Attit_VEXc_dot}
	\end{align}
	Note that in view of Lemma \ref{lem:Lem_VEX_RMI}
	\[
	\underline{\lambda}_{\overline{M}_{r}}^{2}||\tilde{R}_{c}||_{{\rm I}}\leq||\boldsymbol{\Upsilon}(M_{r}\tilde{R}_{c})||^{2}\leq\overline{\lambda}_{\overline{M}_{r}}^{2}||\tilde{R}_{c}||_{{\rm I}}
	\]
	with $\overline{M}_{r}={\rm Tr}\{M_{r}\}\mathbf{I}_{3}-M_{r}$. Define
	$\delta_{c1}$ as a positive constant. Therefore, from \eqref{eq:Attit_Omec_dot1}
	and \eqref{eq:Attit_VEXc_dot}, one has
	\begin{align}
		& \frac{1}{\delta_{c1}}\frac{d}{dt}\boldsymbol{\Upsilon}(\tilde{R}_{c})^{\top}\tilde{\Omega}_{c}\nonumber \\
		& \hspace{1em}=-\frac{1}{\delta_{c1}}(\frac{d}{dt}\boldsymbol{\Upsilon}(\tilde{R}_{c})^{\top})\tilde{\Omega}_{c}-\frac{1}{\delta_{c1}}\boldsymbol{\Upsilon}(\tilde{R}_{c})^{\top}(\frac{d}{dt}R_{d}\tilde{\Omega}_{c})\nonumber \\
		& \hspace{1em}\leq(\frac{c_{c3}}{\delta_{c1}}||\tilde{\Omega}_{c}||+\frac{c_{c3}}{\delta_{c1}}||\tilde{W}_{b}||+\frac{c_{c3}}{\delta_{c1}}||\tilde{d}||)\sqrt{||\tilde{R}_{c}||_{{\rm I}}}\nonumber \\
		& \hspace{2em}-\frac{k_{c1}c_{c2}}{\delta_{c1}}||\tilde{R}_{c}||_{{\rm I}}\label{eq:Attit_VEXcOmec_dot}
	\end{align}
	where $\gamma_{\Omega}\geq\max\{\sup_{t\geq0}||\Omega_{d}||\}$, $c_{c1}=\max\{\sup_{t\geq0}||J^{-1}||_{F},\sup_{t\geq0}||J^{-1}[J\Omega]_{\times}+k_{c2}J^{-1}+[\Omega]_{\times}||_{F},\sup_{t\geq0}||J^{-1}(k_{c2}\mathbf{I}_{3}-[\Omega_{d}]_{\times}J)||_{F}\}$,
	$c_{c2}=\min\{\frac{\underline{\lambda}_{\overline{M}_{r}}^{2}}{\overline{\lambda}_{J}}\}$,
	and $c_{c3}=\max\{\frac{1}{2}+c_{c1}\overline{\lambda}_{\overline{M}_{r}},k_{c2}+\overline{\lambda}_{J}\gamma_{\Omega}\}$.
	Define the following Lyapunov function candidate $\mathcal{U}_{c}=\mathcal{U}_{c}(||M_{r}\tilde{R}_{c}||_{{\rm I}},R_{d}\tilde{\Omega}_{c},\tilde{d})$:
	\begin{align}
		\mathcal{U}_{c}= & 2||M_{r}\tilde{R}_{c}||_{{\rm I}}+\frac{1}{2k_{c1}}\tilde{\Omega}_{c}^{\top}R_{d}^{\top}JR_{d}\tilde{\Omega}_{c}+\frac{1}{\delta_{c1}}\boldsymbol{\Upsilon}(\tilde{R}_{c})^{\top}\tilde{\Omega}_{c}\nonumber \\
		& +\frac{1}{2k_{d}}\tilde{d}^{\top}\tilde{d}\label{eq:Attit_Lyap_c1}
	\end{align}
	where $\mathcal{U}_{c}:\mathbb{SO}\left(3\right)\times\mathbb{R}^{3}\times\mathbb{R}^{3}\rightarrow\mathbb{R}_{+}$.
	Considering Lemma \ref{lem:Lem_VEX_RMI}, one obtains
	\begin{align*}
		& e_{c}^{\top}\underbrace{\left[\begin{array}{ccc}
				\underline{\lambda}_{\overline{M}_{r}} & -\frac{1}{2\delta_{c1}} & 0\\
				-\frac{1}{2\delta_{c1}} & \frac{1}{2k_{c1}} & 0\\
				0 & 0 & \frac{1}{2k_{d}}
			\end{array}\right]}_{H_{4}}e_{c}\\
		& \hspace{2em}\leq\mathcal{U}_{c}\leq e_{c}^{\top}\underbrace{\left[\begin{array}{ccc}
				\overline{\lambda}_{\overline{M}_{r}} & \frac{1}{2\delta_{c1}} & 0\\
				\frac{1}{2\delta_{c1}} & \frac{1}{2k_{c1}} & 0\\
				0 & 0 & \frac{1}{2k_{d}}
			\end{array}\right]}_{H_{5}}e_{c}
	\end{align*}
	where $e_{c}=\left[\sqrt{||\tilde{R}_{c}||_{{\rm I}}},||\tilde{\Omega}_{c}||,||\tilde{d}||\right]^{\top}$.
	It becomes apparent that for $\underline{\lambda}_{\overline{M}_{r}}>\sqrt{\frac{k_{c1}}{2\delta_{c1}^{2}}}$
	and $\overline{\lambda}_{\overline{M}_{r}}>\sqrt{\frac{k_{c1}}{2\delta_{c1}^{2}}}$,
	one finds that $\underline{\lambda}(H_{4}),\overline{\lambda}(H_{5})>0$
	and, in turn $\mathcal{U}_{c}>0$ for all $e_{c}\in\mathbb{R}^{3}\backslash\{0\}$.
	Let us select $\underline{\lambda}_{\overline{M}_{r}}>\sqrt{\frac{k_{c1}}{2\delta_{c1}^{2}}}$.
	Hence, from \eqref{eq:Attit_RMI_Rec_dot} and \eqref{eq:Attit_Omec_dot},
	one finds
	\begin{align}
		\dot{\mathcal{U}}_{c}= & \boldsymbol{\Upsilon}(M_{r}\tilde{R}_{c})^{\top}\tilde{\Omega}_{c}-\frac{1}{k_{d}}\tilde{d}^{\top}\dot{\hat{d}}+\frac{1}{\delta_{c1}}\frac{d}{dt}\boldsymbol{\Upsilon}(\tilde{R}_{c})^{\top}\tilde{\Omega}_{c}\nonumber \\
		& +\frac{1}{k_{c1}}\tilde{\Omega}_{c}^{\top}R_{d}([J\Omega]_{\times}R_{d}^{\top}\tilde{\Omega}_{c}+[\Omega_{d}]_{\times}J\tilde{W}_{b}-w_{c}\nonumber \\
		& \hspace{4em}+\tilde{d}-[Jn]_{\times}\Omega_{d})\label{eq:Attit_Lyap_c2}
	\end{align}
	where $\mathcal{L}\mathcal{U}_{c}=\dot{\mathcal{U}}_{c}$. From \eqref{eq:Attit_Cont_Law},
	one obtains
	\begin{align}
		\mathcal{L}\mathcal{U}_{c}\leq & -\frac{k_{c1}c_{c2}}{\delta_{c1}}||\tilde{R}_{c}||_{{\rm I}}-\frac{2k_{c2}-1}{2k_{c1}}||\tilde{\Omega}_{c}||^{2}-\frac{1}{2}||\tilde{d}||^{2}\nonumber \\
		& +(\frac{c_{c3}}{\delta_{c1}}||\tilde{\Omega}_{c}||+\frac{c_{c3}}{\delta_{c1}}||\tilde{d}|)\sqrt{||\tilde{R}_{c}||_{{\rm I}}}+\frac{1}{k_{c1}}||\tilde{\Omega}_{c}||\,||\tilde{d}||\nonumber \\
		& +(\frac{c_{c3}}{\delta_{c1}}\sqrt{||\tilde{R}_{c}||_{{\rm I}}}+\frac{c_{1}}{k_{c1}}||\tilde{\Omega}_{c}||)||\tilde{W}_{b}||\nonumber \\
		& +\frac{1}{k_{c1}}||k_{c2}n+[Jn]_{\times}\Omega_{d}||+\frac{1}{2}||d||^{2}\label{eq:Attit_Lyap_c3}
	\end{align}
	Let us define $\eta_{c}=\frac{1}{k_{c1}}\sup_{t\geq0}||k_{c2}n+[Jn]_{\times}\Omega_{d}||+\frac{1}{2}||d||^{2}$.
	Thus, the expression in \eqref{eq:Attit_Lyap_c4} becomes as follows:
	\begin{align}
		\mathcal{L}\mathcal{U}_{c}\leq & -e_{c}^{\top}\underbrace{\left[\begin{array}{ccc}
				\frac{k_{c1}c_{c2}}{\delta_{c1}} & -\frac{c_{c3}}{2\delta_{c1}} & -\frac{c_{c3}}{2\delta_{c1}}\\
				-\frac{c_{c3}}{2\delta_{c1}} & \frac{2k_{c2}-1}{2k_{c1}} & \frac{1}{2k_{c1}}\\
				-\frac{c_{c3}}{2\delta_{c1}} & \frac{1}{2k_{c1}} & \frac{1}{2}
			\end{array}\right]}_{H_{6}}e_{c}\nonumber \\
		& +\frac{c_{c3}}{\delta_{c1}}||\tilde{W}_{b}||\sqrt{||\tilde{R}_{c}||_{{\rm I}}}+\frac{c_{1}}{k_{c1}}||\tilde{\Omega}_{c}||\,||\tilde{W}_{b}||+\eta_{c}\label{eq:Attit_Lyap_c4}
	\end{align}
	One is able to show that $\underline{\lambda}(H_{6})>0$ for $\delta_{c1}>\frac{c_{c3}^{2}\left(k_{c1}+2k_{c2}-3\right)}{2k_{c1}c_{c2}(2k_{c2}-3)}$.
	Consider selecting $\delta_{c1}$ such that $\underline{\lambda}(H_{6})>0$,
	and let $\underline{\lambda}_{H_{6}}=\underline{\lambda}(H_{6})$.
	One has
	\begin{align}
		\mathcal{L}\mathcal{U}_{c}\leq & -\underline{\lambda}_{H_{6}}||e_{c}||^{2}+(\frac{c_{c3}}{\delta_{c1}}\sqrt{||\tilde{R}_{c}||_{{\rm I}}}+\frac{c_{1}}{k_{c1}}||\tilde{\Omega}_{c}||)||\tilde{W}_{b}||\nonumber \\
		& +\eta_{c}\label{eq:Attit_Lyap_c5}
	\end{align}
	From \eqref{eq:Attit_Lyap} and \eqref{eq:Attit_Lyap_c1}, define
	the following Lyapunov function candidate:
	\begin{equation}
		\mathcal{U}_{T}=\mathcal{U}_{o}+\mathcal{U}_{c}\label{eq:Attit_Lyap_T}
	\end{equation}
	From \eqref{eq:Attit_Lyap_6} and \eqref{eq:Attit_Lyap_c5}, one obtains
	\begin{align}
		\mathcal{L}\mathcal{U}_{T}\leq & -\underline{\lambda}_{H_{3}}||e_{o}||^{2}-\underline{\lambda}_{H_{6}}||e_{c}||^{2}+\frac{c_{c3}}{\delta_{c1}}||\tilde{W}_{b}||\sqrt{||\tilde{R}_{c}||_{{\rm I}}}\nonumber \\
		& +\frac{c_{1}}{k_{c1}}||\tilde{\Omega}_{c}||\,||\tilde{W}_{b}||+\eta_{o}+\eta_{c}\label{eq:Attit_Lyap_T1}
	\end{align}
	Let $c_{c}=\max\{\frac{c_{c3}}{\delta_{c1}},\frac{c_{1}}{k_{c1}}\}$
	and $\eta_{T}=\eta_{o}+\eta_{c}$. Thus, one shows
	\begin{align}
		\mathcal{L}\mathcal{U}_{T}\leq & -e_{T}^{\top}\underbrace{\left[\begin{array}{cc}
				\underline{\lambda}_{H_{3}} & c_{c}\\
				c_{c} & \underline{\lambda}_{H_{6}}
			\end{array}\right]}_{H_{T}}e_{T}+\eta_{T}\label{eq:Attit_Lyap_T2}
	\end{align}
	where $e_{T}=[||e_{o}||,||e_{c}||]^{\top}$. The expression in \eqref{eq:Attit_Lyap_T1}
	implies that $\underline{\lambda}_{H_{T}}=\underline{\lambda}(H_{T})>0$
	for $\underline{\lambda}_{H_{3}}>\frac{c_{c}^{2}}{\underline{\lambda}_{H_{6}}}$.
	Let us select $\underline{\lambda}_{H_{3}}>\frac{c_{c}^{2}}{\underline{\lambda}_{H_{6}}}$.
	Therefore, $\mathcal{L}\mathcal{U}_{T}<0$ if
	\[
	||e_{T}||^{2}>\frac{\eta_{T}}{\underline{\lambda}(H_{T})}
	\]
	Thus
	\begin{equation}
		\frac{d\mathbb{E}[\mathcal{U}_{T}]}{dt}=\mathbb{E}[\mathcal{L}\mathcal{U}_{T}]\leq-\frac{\underline{\lambda}(H_{T})}{\lambda_{x}}\mathbb{E}[\mathcal{U}_{T}]+\eta_{T}\label{eq:Attit_Lyap7-1}
	\end{equation}
	with $\lambda_{x}=\max\{\overline{\lambda}(H_{2}),\overline{\lambda}(H_{5})\}$.
	Hence, one concludes that $e_{T}$ is almost SGUUB completing the
	proof.\end{proof}

\section{Summary of Implementation\label{sec:Summary-of-Implementation}}

The detailed implementation steps of the discrete NN-based nonlinear
stochastic filter-based controller for the attitude tracking problem
are presented in Algorithm \ref{alg:DiscFilter1}. Note that $\Delta t$
describes a small sampling time.
\begin{algorithm}
	\caption{\label{alg:DiscFilter1}Stochastic filter-based controller algorithm}
	
	\textbf{Initialization}:
	\begin{enumerate}
		\item[{\footnotesize{}1:}] Set $\hat{R}[0]=\hat{R}_{0}\in\mathbb{SO}\left(3\right)$, $\hat{W}_{b}[0]=\hat{W}_{b|0}=0_{3\times1}$,
		$\hat{W}_{\sigma}[0]=\hat{W}_{\sigma|0}=0_{q\times q}$, $\hat{d}[0]=\hat{d}_{0}=0_{3\times1}$,
		$s_{i}\geq0$ for all $i\geq2$, select $\Gamma_{\sigma},k_{ob},k_{o\sigma},k_{c1},k_{c2},k_{d},\gamma_{b},\gamma_{d}>0$,
		$K_{\Upsilon}=\text{rand}(q,3)\in\mathbb{R}^{q\times3}$, $\Gamma_{b}\in\mathbb{R}^{q\times3}$
		where $\underline{\lambda}(\Gamma_{b}^{\top}\Gamma_{b})>0$, and set
		$k=0$.\vspace{1mm}
	\end{enumerate}
	\textbf{while}
	\begin{enumerate}
		\item[{\footnotesize{}2:}] $\begin{cases}
			{\bf r}_{i} & =\frac{r_{i}}{||r_{i}||},\hspace{1em}{\bf y}_{i}=\frac{y_{i}}{||y_{i}||},\hspace{1em}i=1,2,\ldots,N\\
			\hat{{\bf y}}_{i} & =\hat{R}_{k-1}^{\top}{\bf r}_{i}\\
			\boldsymbol{\Upsilon} & =\boldsymbol{\Upsilon}(M_{y}\tilde{R}_{o})=\sum_{i=1}^{N}\frac{s_{i}}{2}\hat{{\bf y}}_{i}\times{\bf y}_{i}\\
			||M_{y}\tilde{R}_{o}||_{{\rm I}} & =\frac{1}{4}{\rm Tr}\{\sum_{i=1}^{N}s_{i}{\bf y}_{i}({\bf y}_{i}-\hat{{\bf y}}_{i})^{\top}\}
		\end{cases}$\vspace{1mm}
		\item[{\footnotesize{}3:}] $\begin{cases}
			\varPsi_{1}= & (||M_{y}\tilde{R}_{o}||_{{\rm I}}+1)\exp(||M_{y}\tilde{R}_{o}||_{{\rm I}})\\
			\varPsi_{2}= & (||M_{y}\tilde{R}_{o}||_{{\rm I}}+2)\exp(||M_{y}\tilde{R}_{o}||_{{\rm I}})
		\end{cases}$\vspace{1mm}
		\item[{\footnotesize{}4:}] $\varphi(\boldsymbol{\Upsilon})=\tanh(K_{\Upsilon}\boldsymbol{\Upsilon})$
		\textcolor{blue}{/{*} hyperbolic tangent activation function {*}/}\vspace{1mm}
		\item[{\footnotesize{}5:}] $\begin{cases}
			\hat{W}_{b|k}= & \hat{W}_{b|k-1}+\Delta t\gamma_{b}(\varPsi_{1}\Gamma_{b}\varphi(\boldsymbol{\Upsilon})-k_{ob}\hat{W}_{b|k-1})\\
			\hat{W}_{\sigma|k}= & \hat{W}_{\sigma|k-1}\\
			& +\Delta t\Gamma_{\sigma}(\frac{\varPsi_{2}}{4}\varphi(\boldsymbol{\Upsilon})\varphi(\boldsymbol{\Upsilon})^{\top}-k_{o\sigma}\hat{W}_{\sigma|k-1})
		\end{cases}$\vspace{1mm}
		\item[{\footnotesize{}6:}] $C=\left(\Gamma_{b}^{\top}\mathbf{I}_{3}+\frac{\varPsi_{2}}{4\varPsi_{1}}(\Gamma_{b}^{\top}\Gamma_{b})^{-1}\Gamma_{b}^{\top}\hat{W}_{\sigma|k}\right)\varphi(\boldsymbol{\Upsilon})$\vspace{1mm}
		\item[] \textcolor{blue}{/{*} angle-axis parameterization {*}/}\vspace{1mm}
		\item[{\footnotesize{}7:}] $\begin{cases}
			\eta & =(\Omega_{m|k}-\hat{W}_{b|k}-C)\Delta t\\
			\beta & =||\eta||,\hspace{1em}u=\eta/||\eta||\\
			\mathcal{R}_{exp} & =\mathbf{I}_{3}+\sin(\beta)[u]_{\times}+(1-\cos(\beta))[u]_{\times}^{2}
		\end{cases}$\vspace{1mm}
		\item[{\footnotesize{}8:}] $\hat{R}_{k}=\hat{R}_{k-1}\mathcal{R}_{exp}$\vspace{1mm}
		\item[{\footnotesize{}8:}] $\begin{cases}
			\boldsymbol{\Upsilon}_{c} & =\boldsymbol{\Upsilon}(M_{r}\tilde{R}_{c})=\sum_{i=1}^{N}s_{i}R_{d}\hat{{\bf y}}_{i}\times{\bf r}_{i}\\
			w_{c} & =k_{c1}R_{d|k}^{\top}\boldsymbol{\Upsilon}_{c}+k_{c2}(\Omega_{m|k}-\Omega_{d|k}-\hat{W}_{b|k})\\
			\hat{d}_{k} & =\hat{d}_{k-1}\\
			& +\Delta t\left(\frac{k_{d}}{k_{c1}}(\Omega_{d|k}-\Omega_{m|k}+\hat{W}_{b|k})-\gamma_{d}k_{d}\hat{d}_{k-1}\right)\\
			\mathcal{T} & =J\dot{\Omega}_{d|k}-[J(\Omega_{m|k}-\hat{W}_{b|k})]_{\times}\Omega_{d|k}-\hat{d}_{k}-w_{c}
		\end{cases}$\vspace{1mm}
		\item[{\footnotesize{}9:}] $k+1\rightarrow k$
	\end{enumerate}
	\textbf{end while}
\end{algorithm}

\section{Numerical Results \label{sec:SE3_Simulations}}

This section demonstrates the effectiveness of the proposed real-time
NN stochastic filter-based controller on $SO(3)$ at a low sampling
rate ($\Delta t=0.01$ sec). The discrete filter detailed in Algorithm
\ref{alg:DiscFilter1} is validated considering large initialization
error, high level of uncertainties, and unknown disturbances. Let
the initial desired attitude be set as $R_{d}[0]=R_{d|0}=\mathbf{I}_{3}\in\mathbb{SO}\left(3\right)$
and the desired angular velocity rate be
\[
\dot{\Omega}_{d}=0.1\left[\begin{array}{c}
	1\sin(0.15t+\frac{\pi}{4})\\
	0.5\sin(0.1t+\frac{\pi}{3})\\
	0.8\cos(0.12t+\frac{\pi}{2})
\end{array}\right]
\]
\begin{figure}[h]
	\centering{}\includegraphics[scale=0.28]{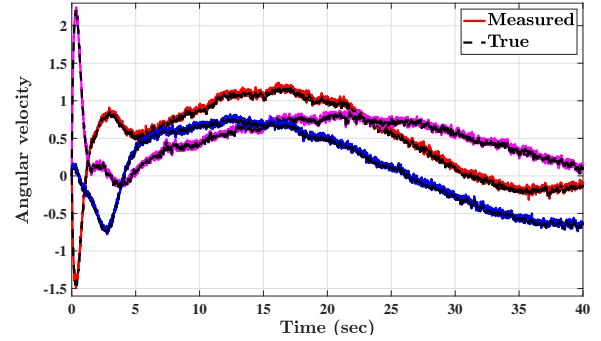}\caption{Sample of angular velocity (0 to 40 seconds): Black center-line (True)
		vs colored solid-line (measured)}
	\label{fig:Fig_Omega}
\end{figure}

\begin{figure}[h]
	\centering{}\includegraphics[scale=0.29]{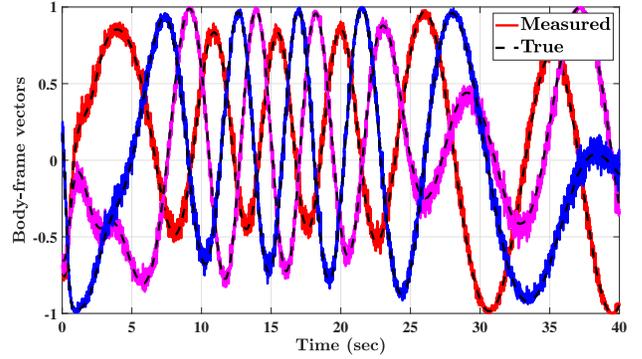}\caption{Sample of vector measurements ($y_{1}$): Black center-line (True)
		vs colored solid-line (measured)}
	\label{fig:Fig_Vect_Meas}
\end{figure}
Let us select the design parameters arbitrary as follows: $k_{ob}=1$,
$k_{o\sigma}=1$, $\gamma_{b}=1$, $\Gamma_{b}=2\mathbf{I}_{3}$,
$\Gamma_{\sigma}=2\mathbf{I}_{3}$, $k_{c1}=10$, $k_{c2}=2$, $k_{d}=10$,
and $\gamma=0.01$. Number of neurons associated with bias and noise
are selected to be both equal to three. Let the initial estimates
of NN weights, disturbances, and attitude be set to $\hat{W}_{b|0}=\hat{W}_{b}[0]=0_{3\times1}$
and $\hat{W}_{\sigma|0}=\hat{W}_{\sigma}[0]=0_{3\times3}$, $\hat{d}_{0}=\hat{d}[0]=0_{3\times1}$,
and $\hat{R}(0)=\hat{R}_{0}=\mathbf{I}_{3}\in\mathbb{SO}\left(3\right)$,
respectively. To validate the robustness of the proposed approach
against high level of uncertainties contributed by the low-cost inertial
measurement unit, consider the measured angular velocity to be corrupted
with unknown weighted bias $W_{b}=0.03[1,0.5,0.7]^{\top}$ $\text{(rad/sec)}$
and normally distributed random noise $n=\mathcal{N}(0,0.05)$ (rad/sec)
(zero mean and standard deviation of $0.05$), see \eqref{eq:Attit_Om_m}.
Consider the following two observations in $\{\mathcal{I}\}$: $r_{1}=\frac{1}{\sqrt{3}}[1,1,-1]^{\top}$
and $r_{2}=[0,0,1]^{\top}$. Consider $\{\mathcal{B}\}$ measurements
to be corrupted by unknown bias $b_{1}=0.05[0.2,0.1,-0.8]^{\top}$
and $b_{2}=0.05[0.5,-0.6,0.4]^{\top}$ as well as normally distributed
random noise $n_{1}=n_{2}=\mathcal{N}(0,0.05)$, visit \eqref{eq:Attit_Vec_yi}.
\begin{figure*}
	\centering{}\includegraphics[scale=0.27]{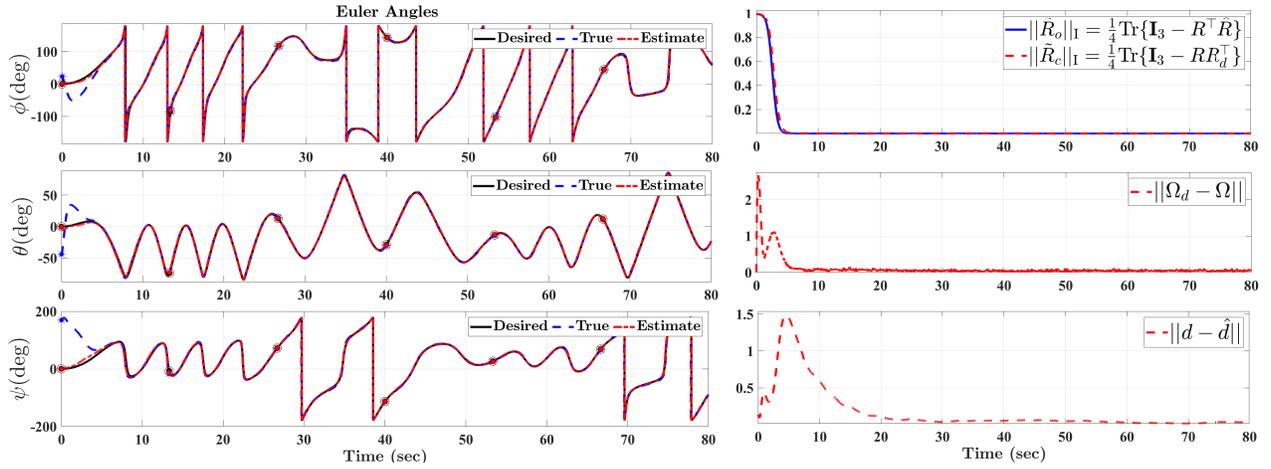}\caption{Right portion demonstrates the evolution trajectories of Euler angles
		(desired marked as a black solid-line, true plotted as a blue dashed-line,
		and estimated depicted as a red center-line). Left portion illustrates
		the error convergence of attitude (estimation error marked as a blue
		solid-line and control error plotted as a red dashed-line), angular
		velocity (red dashed-line), and disturbance (red dashed-line) using
		3 neurons.}
	\label{fig:Fig_Euler}
\end{figure*}
Let the unknown disturbance be $d=0.1[1,3,2]^{\top}$ and the rigid-body's
inertia matrix be $J=diag(0.016,0.015,0.03)$. To test the algorithm
in presence of a very large initialization error in attitude estimation
and control, let the initial value of the true attitude $R$ be 
\[
R[0]=\left[\begin{array}{ccc}
	-0.7060 & 0.0956 & 0.7018\\
	0.1274 & -0.9576 & 0.2585\\
	0.6967 & 0.2719 & 0.6638
\end{array}\right]\in\mathbb{SO}\left(3\right)
\]
and the initial value of the true angular velocity be $\Omega(0)=[0.2,0.3,0.3]^{\top}$
$\text{(rad/sec)}$. Hence, one finds $||\tilde{R}_{o}(0)||_{{\rm I}}=\frac{1}{4}{\rm Tr}\{\mathbf{I}_{3}-R_{0}^{\top}\hat{R}_{0}\}\approx0.999$
and $||\tilde{R}_{c}(0)||_{{\rm I}}=\frac{1}{4}{\rm Tr}\{\mathbf{I}_{3}-R_{0}R_{d|0}^{\top}\}\approx0.999$
very close to the unstable equilibrium $+1$. Let us select hyperbolic
tangent activation function $\varphi(\alpha)=\frac{\exp(\alpha)-\exp(-\alpha)}{\exp(\alpha)+\exp(-\alpha)}$
$\forall$ $\alpha\in\mathbb{R}$ (see Algorithm \ref{alg:DiscFilter1},
step 4).

Fig. \ref{fig:Fig_Omega} contrasts the angular velocity measurements
corrupted by high level of bias and noise with the true values. Likewise,
Fig. \ref{fig:Fig_Vect_Meas} presents high level of uncertainties
corrupting body-frame measurements versus the true data (${\bf y}_{1}$).
The left portion of Fig. \ref{fig:Fig_Euler} depicts the output performance
of the true Euler angles (roll ($\phi$), pitch ($\theta$), yaw ($\psi$)),
the estimated angles ($\hat{\phi}$, $\hat{\theta}$, and $\hat{\psi}$),
and the desired angles ($\phi_{d}$, $\theta_{d}$, $\psi_{d}$).
Fig. \ref{fig:Fig_Euler} shows rapid and accurate tracking performance
of the presented approach. The right portion of Fig. \ref{fig:Fig_Euler}
reveals the robustness of the proposed filter-based controller in
terms of error convergence of: attitude ($||\tilde{R}_{o}||_{{\rm I}}=\frac{1}{4}{\rm Tr}\{\mathbf{I}_{3}-R^{\top}\hat{R}\}$
and $||\tilde{R}_{c}||_{{\rm I}}=\frac{1}{4}{\rm Tr}\{\mathbf{I}_{3}-RR_{d}^{\top}\}$),
angular velocity $||\Omega-\Omega_{d}||$, and disturbance $||d-\hat{d}||$.
As illustrated in Fig \ref{fig:Fig_Euler}, the initially very large
error components converged very close to the origin. Fig. \ref{fig:Fig_West}
presents boundedness of $\hat{W}_{b}$ and $\hat{W}_{\sigma}$ NN
weights plotted with respect to the Euclidean and Frobenius norm,
respectively. Finally, Fig. \ref{fig:Fig_neurons} shows the robustness
of NN approximation in terms of transient response and steady-state
of normalized Euclidean attitude error. Although increasing the number
of neurons results in reduced steady-state error, three neurons where
sufficient to achieve impressive tracking performance. Table \ref{tab:SO3_1}
presents statistical details of mean and standard deviation of the
steady-state error values starting from 4 up to 50 seconds relative
to neuron number.
\begin{figure}[h]
	\centering{}\includegraphics[scale=0.29]{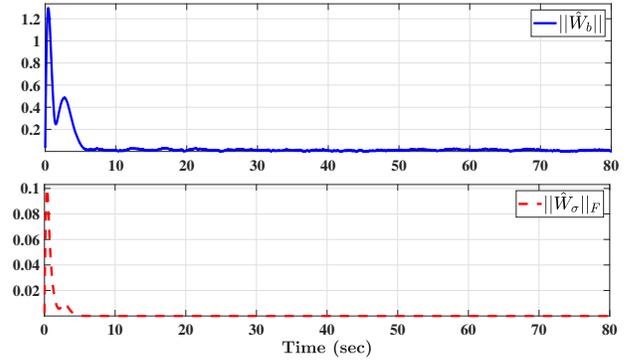}\caption{Boundedness of $\hat{W}_{b}$ and $\hat{W}_{\sigma}$ NN weight estimates
		using 3 neurons.}
	\label{fig:Fig_West}
\end{figure}

\begin{figure}[h]
	\centering{}\includegraphics[scale=0.28]{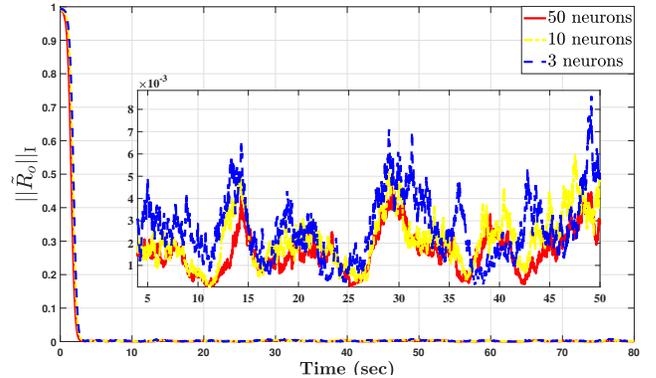}\caption{Normalized Euclidean error $||\tilde{R}_{o}||_{{\rm I}}=\frac{1}{4}{\rm Tr}\{\mathbf{I}_{3}-R_{k}^{\top}\hat{R}_{k}\}$
		considering 3, 10, and 50 neurons.}
	\label{fig:Fig_neurons}
\end{figure}
\begin{table}[H]
	\caption{\label{tab:SO3_1}Steady-state error of $||\tilde{R}_{o}||_{{\rm I}}$
		versus different number of neurons.}
	
	\centering{}%
	\begin{tabular}{c|>{\centering}p{1.8cm}|>{\centering}p{1.8cm}|>{\centering}p{1.8cm}}
		\hline 
		\noalign{\vskip\doublerulesep}
		\multicolumn{4}{c}{Output results of $||\tilde{R}_{o}||_{{\rm I}}=\frac{1}{4}{\rm Tr}\{\mathbf{I}_{3}-R_{k}^{\top}\hat{R}_{k}\}$
			from 4 to 50 sec}\tabularnewline[\doublerulesep]
		\hline 
		\hline 
		\noalign{\vskip\doublerulesep}
		Neurons \# & 3 & 10 & 50\tabularnewline[\doublerulesep]
		\hline 
		\noalign{\vskip\doublerulesep}
		Mean & $2.7\times10^{-3}$ & $2.1\times10^{-3}$ & $1.6\times10^{-3}$\tabularnewline[\doublerulesep]
		\hline 
		\noalign{\vskip\doublerulesep}
		STD & $1.5\times10^{-3}$ & $1.2\times10^{-3}$ & $9.2\times10^{-4}$\tabularnewline[\doublerulesep]
		\hline 
	\end{tabular}
\end{table}
Table \ref{tab:SO3_1} shows that greater number
of neurons leads to better steady-state error convergence.

\section{Conclusion \label{sec:SE3_Conclusion}}

This paper presented a novel neural network (NN) stochastic filter-based
controller for the attitude problem of a rigid-body rotating in three
dimensional space. The proposed approach is posed on the Lie Group
of the Special Orthogonal Group $\mathbb{SO}(3)$. Firstly, an NN-based
stochastic filter design able to operate directly using measurements
supplied by a local unit attached to the rigid-body has been developed.
It has been demonstrated through numerical simulation that the proposed
filter produces accurate estimation given measurements obtained from
a low-cost inertial measurement unit. The proposed filter relies on
online tuning of NN weights which are extracted using Lyapunov stability.
The closed loop error signals of the proposed filter have been shown
to be SGUUB. Next, a novel control law on $\mathbb{SO}(3)$ reliant
on the estimated states and uncertain measurements supplied by a low-cost
onboard measurement unit has been developed. The overall stability
of the filter-based controller has been confirmed and the closed loop
error signals have been shown to be SGUUB. Numerical results illustrate
the robustness and the fast adaptability of the proposed approach.
In the future, the novel neural network (NN) stochastic filter-based
controller can be reformulated to address colored noise uncertainties.

\section*{Acknowledgment}

\subsection*{Appendix\label{subsec:Appendix-A}}
\begin{center}
	\textbf{Neuro-adaptive Filter-based Controller Quaternion Representation}
	\par\end{center}

\noindent Let us define the three-sphere $\mathbb{S}^{3}$ by 
\[
\mathbb{S}^{3}=\{\left.Q\in\mathbb{R}^{4}\right|||Q||=\sqrt{q_{0}^{2}+q^{\top}q}=1\}
\]
with $Q=[q_{0},q^{\top}]^{\top}\in\mathbb{S}^{3}$ being a unit-quaternion
vector composed of two components: $q_{0}\in\mathbb{R}$ and $q\in\mathbb{R}^{3}$.
The equivalent mapping $\mathcal{R}_{Q}:\mathbb{S}^{3}\rightarrow\mathbb{SO}\left(3\right)$
is defined by
\begin{align}
	\mathcal{R}_{Q} & =(q_{0}^{2}-||q||^{2})\mathbf{I}_{3}+2qq^{\top}+2q_{0}\left[q\right]_{\times}\in\mathbb{SO}\left(3\right)\label{eq:NAV_Append_SO3}
\end{align}
see \cite{hashim2019AtiitudeSurvey}. Let $\hat{Q}=[\hat{q}_{0},\hat{q}^{\top}]^{\top}\in\mathbb{S}^{3}$
denote the estimate of $Q=[q_{0},q^{\top}]^{\top}\in\mathbb{S}^{3}$
and $Q_{d}=[q_{d0},q_{d}^{\top}]^{\top}\in\mathbb{S}^{3}$ as the
desired unit-quaternion. The quaternion representation of the NN-based
nonlinear stochastic filter in \eqref{eq:NAV_Filter1_Detailed} is
given below:
\begin{equation}
	\begin{cases}
		C & =\left(\Gamma_{b}^{\top}\mathbf{I}_{3}+\frac{\varPsi_{2}}{4\varPsi_{1}}(\Gamma_{b}^{\top}\Gamma_{b})^{-1}\Gamma_{b}^{\top}\hat{W}_{\sigma}\right)\varphi(\boldsymbol{\Upsilon}_{o})\\
		\dot{\hat{W}}_{b} & =\gamma_{b}(\varPsi_{1}\Gamma_{b}\varphi(\boldsymbol{\Upsilon}_{o})-k_{ob}\hat{W}_{b})\\
		\dot{\hat{W}}_{\sigma} & =\frac{\varPsi_{2}}{4}\Gamma_{\sigma}\varphi(\boldsymbol{\Upsilon}_{o})\varphi(\boldsymbol{\Upsilon}_{o})^{\top}-k_{o\sigma}\Gamma_{\sigma}\hat{W}_{\sigma}\\
		h & =\Omega_{m}-\hat{W}_{b}-C\\
		\mathcal{H} & =\left[\begin{array}{cc}
			0 & -h^{\top}\\
			h & -[h]_{\times}
		\end{array}\right]\\
		\dot{\hat{Q}} & =\frac{1}{2}\mathcal{H}\hat{Q}
	\end{cases}\label{eq:NAV_Filter1_Detailed-1}
\end{equation}
where
\begin{equation}
	\begin{cases}
		||M_{y}\tilde{R}_{o}||_{{\rm I}} & =\frac{1}{4}{\rm Tr}\{\sum_{i=1}^{N}s_{i}{\bf y}_{i}({\bf y}_{i}-\hat{{\bf y}}_{i})^{\top}\}\\
		\boldsymbol{\Upsilon}_{o} & =\sum_{i=1}^{N}\frac{s_{i}}{2}\hat{{\bf y}}_{i}\times{\bf y}_{i}\\
		\varPsi_{1} & =(||M_{y}\tilde{R}_{o}||_{{\rm I}}+1)\exp(||M_{y}\tilde{R}_{o}||_{{\rm I}})\\
		\varPsi_{2} & =(||M_{y}\tilde{R}_{o}||_{{\rm I}}+2)\exp(||M_{y}\tilde{R}_{o}||_{{\rm I}})
	\end{cases}\label{eq:NAV_Filter1_Auxillary-1}
\end{equation}
The equivalent quaternion representation of the control law in \eqref{eq:Attit_Cont_Law}
is given as follows:

\begin{equation}
	\begin{cases}
		\boldsymbol{\Upsilon}_{c} & =\sum_{i=1}^{N}s_{i}\mathcal{R}_{d}{\bf y}_{i}\times{\bf r}_{i}\\
		\mathcal{T} & =J\dot{\Omega}_{d}-[J(\Omega_{m}-\hat{W}_{b})]_{\times}\Omega_{d}-\hat{d}-w_{c}\\
		w_{c} & =k_{c1}\mathcal{R}_{d}^{\top}\boldsymbol{\Upsilon}_{c}+k_{c2}(\Omega_{m}-\hat{W}_{b}-\Omega_{d})\\
		\dot{\hat{d}} & =\frac{k_{d}}{k_{c1}}(\Omega_{m}-\hat{W}_{b}-\Omega_{d})-\gamma_{d}k_{d}\hat{d}
	\end{cases}\label{eq:Attit_Cont_Law-1}
\end{equation}
where $\mathcal{R}_{d}=(q_{d0}^{2}-||q_{d}||^{2})\mathbf{I}_{3}+2q_{d}q_{d}^{\top}+2q_{d0}[q_{d}]_{\times}\in\mathbb{SO}\left(3\right)$.

\bibliographystyle{IEEEtran}
\bibliography{bib_Neuro_Attit}

\begin{thebibliography}{10}
\providecommand{\url}[1]{#1}
\csname url@samestyle\endcsname
\providecommand{\newblock}{\relax}
\providecommand{\bibinfo}[2]{#2}
\providecommand{\BIBentrySTDinterwordspacing}{\spaceskip=0pt\relax}
\providecommand{\BIBentryALTinterwordstretchfactor}{4}
\providecommand{\BIBentryALTinterwordspacing}{\spaceskip=\fontdimen2\font plus
\BIBentryALTinterwordstretchfactor\fontdimen3\font minus
  \fontdimen4\font\relax}
\providecommand{\BIBforeignlanguage}[2]{{%
\expandafter\ifx\csname l@#1\endcsname\relax
\typeout{** WARNING: IEEEtran.bst: No hyphenation pattern has been}%
\typeout{** loaded for the language `#1'. Using the pattern for}%
\typeout{** the default language instead.}%
\else
\language=\csname l@#1\endcsname
\fi
#2}}
\providecommand{\BIBdecl}{\relax}
\BIBdecl

\bibitem{zlotnik2016nonlinear}
D.~E. Zlotnik and J.~R. Forbes, ``Nonlinear estimator design on the special
  orthogonal group using vector measurements directly,'' \emph{IEEE
  Transactions on Automatic Control}, vol.~62, no.~1, pp. 149--160, 2017.

\bibitem{hashim2022ExpVTOL}
H.~A. Hashim, ``Exponentially stable observer-based controller for
  {VTOL}-{UAV}s without velocity measurements,'' \emph{International Journal of
  Control}, no. just-accepted, pp. 1--15, 2022.

\bibitem{hu2017observer}
Q.~Hu, B.~Jiang, and Y.~Zhang, ``Observer-based output feedback attitude
  stabilization for spacecraft with finite-time convergence,'' \emph{IEEE
  Transactions on Control Systems Technology}, vol.~27, no.~2, pp. 781--789,
  2019.

\bibitem{hashim2019SO3Wiley}
H.~A. Hashim, ``Systematic convergence of nonlinear stochastic estimators on
  the special orthogonal group {SO}(3),'' \emph{International Journal of Robust
  and Nonlinear Control}, vol.~30, no.~10, pp. 3848--3870, 2020.

\bibitem{chai2019six}
R.~Chai, A.~Tsourdos, A.~Savvaris, S.~Chai, Y.~Xia, and C.~P. Chen, ``Six-dof
  spacecraft optimal trajectory planning and real-time attitude control: a deep
  neural network-based approach,'' \emph{IEEE transactions on neural networks
  and learning systems}, vol.~31, no.~11, pp. 5005--5013, 2020.

\bibitem{lee2012robust}
T.~Lee, ``Robust adaptive attitude tracking on {SO}(3) with an application to a
  quadrotor uav,'' \emph{IEEE Transactions on Control Systems Technology},
  vol.~21, no.~5, pp. 1924--1930, 2012.

\bibitem{kuo2008open}
Y.~Kuo, K.~Kumar, K.~Behdinan, and Z.~Fawaz, ``Open-loop optimal attitude
  control of miniature spacecraft using mems actuators,'' \emph{IEEE
  Transactions on Aerospace and Electronic Systems}, vol.~44, no.~4, pp.
  1381--1390, 2008.

\bibitem{pham2015gain}
M.~D. Pham, K.~S. Low, S.~T. Goh, and S.~Chen, ``Gain-scheduled extended kalman
  filter for nanosatellite attitude determination system,'' \emph{IEEE
  Transactions on Aerospace and Electronic Systems}, vol.~51, no.~2, pp.
  1017--1028, 2015.

\bibitem{shuster1981three}
M.~D. Shuster and S.~D. Oh, ``Three-axis attitude determination from vector
  observations,'' \emph{Journal of Guidance, Control, and Dynamics}, vol.~4,
  pp. 70--77, 1981.

\bibitem{markley1988attitude}
F.~L. Markley, ``Attitude determination using vector observations and the
  singular value decomposition,'' \emph{Journal of the Astronautical Sciences},
  vol.~36, no.~3, pp. 245--258, 1988.

\bibitem{choukroun2006novel}
D.~Choukroun, I.~Y. Bar-Itzhack, and Y.~Oshman, ``Novel quaternion kalman
  filter,'' \emph{IEEE Transactions on Aerospace and Electronic Systems},
  vol.~42, no.~1, pp. 174--190, 2006.

\bibitem{markley2003attitude}
F.~L. Markley, ``Attitude error representations for kalman filtering,''
  \emph{Journal of guidance, control, and dynamics}, vol.~26, no.~2, pp.
  311--317, 2003.

\bibitem{crassidis2003unscented}
J.~L. Crassidis and F.~L. Markley, ``Unscented filtering for spacecraft
  attitude estimation,'' \emph{Journal of guidance, control, and dynamics},
  vol.~26, no.~4, pp. 536--542, 2003.

\bibitem{hashim2018SO3Stochastic}
H.~A. Hashim, L.~J. Brown, and K.~McIsaac, ``Nonlinear stochastic attitude
  filters on the special orthogonal group 3: Ito and stratonovich,'' \emph{IEEE
  Transactions on Systems, Man, and Cybernetics: Systems}, vol.~49, no.~9, pp.
  1853--1865, 2019.

\bibitem{phogat2020invariant}
K.~S. Phogat and D.~E. Chang, ``Invariant extended kalman filter on matrix lie
  groups,'' \emph{Automatica}, vol. 114, p. 108812, 2020.

\bibitem{shuster1993survey}
M.~D. Shuster, ``A survey of attitude representations,'' \emph{Navigation},
  vol.~8, no.~9, pp. 439--517, 1993.

\bibitem{mahony2008nonlinear}
R.~Mahony, T.~Hamel, and J.-M. Pflimlin, ``Nonlinear complementary filters on
  the special orthogonal group,'' \emph{IEEE Transactions on Automatic
  Control}, vol.~53, no.~5, pp. 1203--1218, 2008.

\bibitem{zhu2011adaptive}
Z.~Zhu, Y.~Xia, and M.~Fu, ``Adaptive sliding mode control for attitude
  stabilization with actuator saturation,'' \emph{IEEE Transactions on
  Industrial Electronics}, vol.~58, no.~10, pp. 4898--4907, 2011.

\bibitem{zou2010quaternion}
A.-M. Zou, K.~D. Kumar, and Z.-G. Hou, ``Quaternion-based adaptive output
  feedback attitude control of spacecraft using chebyshev neural networks,''
  \emph{IEEE transactions on neural networks}, vol.~21, no.~9, pp. 1457--1471,
  2010.

\bibitem{tian2015finite}
B.~Tian, L.~Yin, and H.~Wang, ``Finite-time reentry attitude control based on
  adaptive multivariable disturbance compensation,'' \emph{IEEE Transactions on
  Industrial Electronics}, vol.~62, no.~9, pp. 5889--5898, 2015.

\bibitem{zlotnik2014rotation}
D.~E. Zlotnik and J.~R. Forbes, ``Rotation-matrix-based attitude control
  without angular velocity measurements,'' in \emph{2014 American control
  conference}.\hskip 1em plus 0.5em minus 0.4em\relax IEEE, 2014, pp.
  4931--4936.

\bibitem{xiao2014fault}
B.~Xiao, Q.~Hu, Y.~Zhang, and X.~Huo, ``Fault-tolerant tracking control of
  spacecraft with attitude-only measurement under actuator failures,''
  \emph{Journal of Guidance, Control, and Dynamics}, vol.~37, no.~3, pp.
  838--849, 2014.

\bibitem{caccavale1999output}
F.~Caccavale and L.~Villani, ``Output feedback control for attitude tracking,''
  \emph{Systems \& Control Letters}, vol.~38, no.~2, pp. 91--98, 1999.

\bibitem{salcudean1991globally}
S.~Salcudean, ``A globally convergent angular velocity observer for rigid body
  motion,'' \emph{IEEE transactions on Automatic Control}, vol.~36, no.~12, pp.
  1493--1497, 1991.

\bibitem{mayhew2011quaternion}
C.~G. Mayhew, R.~G. Sanfelice, and A.~R. Teel, ``Quaternion-based hybrid
  control for robust global attitude tracking,'' \emph{IEEE Transactions on
  Automatic control}, vol.~56, no.~11, pp. 2555--2566, 2011.

\bibitem{liu2018disturbance}
Z.~Liu, J.~Liu, and L.~Wang, ``Disturbance observer based attitude control for
  flexible spacecraft with input magnitude and rate constraints,''
  \emph{Aerospace Science and Technology}, vol.~72, pp. 486--492, 2018.

\bibitem{hashim2022NNFilter}
H.~A. Hashim, M.~Abouheaf, and K.~G. Vamvoudakis, ``Neural-adaptive stochastic
  attitude filter on {SO} (3),'' \emph{IEEE Control Systems Letters}, vol.~6,
  pp. 1549--1554, 2022.

\bibitem{zheng2017consensus}
Y.~Zheng, J.~Ma, and L.~Wang, ``Consensus of hybrid multi-agent systems,''
  \emph{IEEE transactions on neural networks and learning systems}, vol.~29,
  no.~4, pp. 1359--1365, 2018.

\bibitem{chen2017robust}
D.~Chen and Y.~Zhang, ``Robust zeroing neural-dynamics and its time-varying
  disturbances suppression model applied to mobile robot manipulators,''
  \emph{IEEE transactions on neural networks and learning systems}, vol.~29,
  no.~9, pp. 4385--4397, 2018.

\bibitem{li2022command}
Y.~Li, B.~Niu, G.~Zong, J.~Zhao, and X.~Zhao, ``Command filter-based adaptive
  neural finite-time control for stochastic nonlinear systems with time-varying
  full-state constraints and asymmetric input saturation,'' \emph{International
  Journal of Systems Science}, vol.~53, no.~1, pp. 199--221, 2022.

\bibitem{li2019adaptive}
Y.~Li, K.~Li, and S.~Tong, ``Adaptive neural network finite-time control for
  multi-input and multi-output nonlinear systems with positive powers of odd
  rational numbers,'' \emph{IEEE transactions on neural networks and learning
  systems}, vol.~31, no.~7, pp. 2532--2543, 2019.

\bibitem{grip2011attitude}
H.~F. Grip, T.~I. Fossen, T.~A. Johansen, and A.~Saberi, ``Attitude estimation
  using biased gyro and vector measurements with time-varying reference
  vectors,'' \emph{IEEE Transactions on automatic control}, vol.~57, no.~5, pp.
  1332--1338, 2011.

\bibitem{hashim2021_COMP_ENG_PRAC}
H.~A. Hashim, M.~Abouheaf, and M.~A. Abido, ``Geometric stochastic filter with
  guaranteed performance for autonomous navigation based on {IMU} and feature
  sensor fusion,'' \emph{Control Engineering Practice}, vol.~PP, no.~PP, pp.
  1--11, 2021.

\bibitem{khasminskii1980stochastic}
R.~Khasminskii, \emph{Stochastic stability of differential equations}.\hskip
  1em plus 0.5em minus 0.4em\relax Rockville, MD: S \& N International, 1980.

\bibitem{ito1984lectures}
K.~Ito and K.~M. Rao, \emph{Lectures on stochastic processes}.\hskip 1em plus
  0.5em minus 0.4em\relax Tata institute of fundamental research, 1984,
  vol.~24.

\bibitem{ji2006adaptive}
H.-B. Ji and H.-S. Xi, ``Adaptive output-feedback tracking of stochastic
  nonlinear systems,'' \emph{IEEE Transactions on Automatic Control}, vol.~51,
  no.~2, pp. 355--360, 2006.

\bibitem{deng2001stabilization}
H.~Deng, M.~Krstic, and R.~J. Williams, ``Stabilization of stochastic nonlinear
  systems driven by noise of unknown covariance,'' \emph{IEEE Transactions on
  Automatic Control}, vol.~46, no.~8, pp. 1237--1253, 2001.

\bibitem{zhao2015intelligent}
X.~Zhao, P.~Shi, X.~Zheng, and J.~Zhang, ``Intelligent tracking control for a
  class of uncertain high-order nonlinear systems,'' \emph{IEEE transactions on
  neural networks and learning systems}, vol.~27, no.~9, pp. 1976--1982, 2015.

\bibitem{siniscalchi2016adaptation}
S.~M. Siniscalchi and V.~M. Salerno, ``Adaptation to new microphones using
  artificial neural networks with trainable activation functions,'' \emph{IEEE
  transactions on neural networks and learning systems}, vol.~28, no.~8, pp.
  1959--1965, 2016.

\bibitem{hashim2019AtiitudeSurvey}
H.~A. Hashim, ``Special orthogonal group {SO}(3), euler angles, angle-axis,
  rodriguez vector and unit-quaternion: Overview, mapping and challenges,''
  \emph{ar{X}iv preprint ar{X}iv:1909.06669}, 2019.

\end{thebibliography}
\begin{IEEEbiography}
	[{\includegraphics[scale=0.11,clip,keepaspectratio]{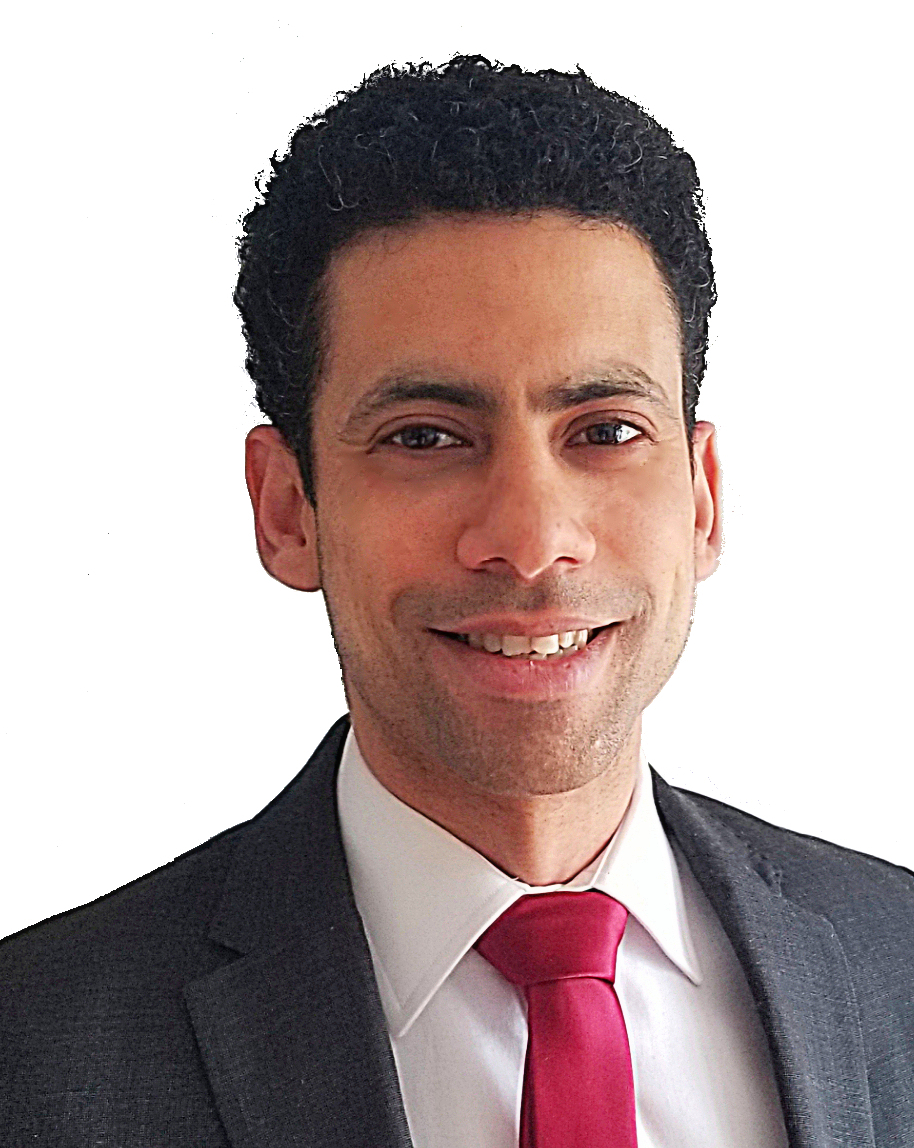}}]{Hashim A. Hashim}
	(Senior Member, IEEE) is an Assistant Professor with the Department of Mechanical and Aerospace
	Engineering, Carleton University, Ottawa, Ontario, Canada. From 2019 to 2021, he was an Assistant Professor	with the Department of Engineering and Applied Science, Thompson Rivers University, Kamloops, British Columbia, Canada.\\
	He received the B.Sc. degree in Mechatronics, Department of Mechanical Engineering from Helwan University, Cairo, Egypt, the M.Sc. in Systems and Control Engineering, Department of Systems Engineering from King Fahd University of Petroleum \& Minerals, Dhahran, Saudi Arabia, and the Ph.D. in Robotics and Control, Department of Electrical and Computer Engineering at Western University, Ontario, Canada.\\
	His current research interests include vision-based aided navigation and control, localization and mapping, stochastic and deterministic estimation, sensor fusion, and distributed control of multi-agent systems.
\end{IEEEbiography}

\begin{IEEEbiography}
	[{\includegraphics[scale=0.21,clip,keepaspectratio]{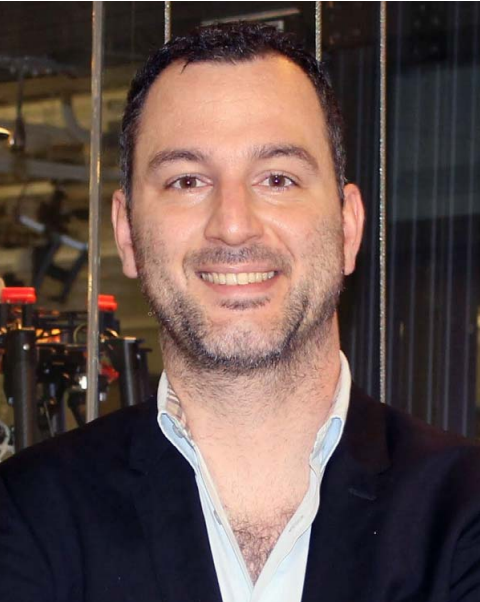}}]{Kyriakos G. Vamvoudakis} (Senior Member, IEEE) was born in Athens, Greece. He received the diploma degree in electronic and computer engineering from the Technical University of Crete, Crete, Greece, in 2006, and the M.Sc. and Ph.D. degrees in electrical engineering from The University of Texas, Arlington, TX, USA, in 2008 and 2011, respectively.\\
	From 2012 to 2016, he was a Research Scientist with the Center for Control, Dynamical Systems and Computation, University of California at Santa Barbara, Santa Barbara, CA, USA. He was an Assistant Professor with the Kevin T. Crofton Department of Aerospace and Ocean Engineering, Virginia Tech, Blacksburg, VA, USA, until 2018. Dr. Vamvoudakis currently serves as an Assistant Professor with The Daniel Guggenheim School of Aerospace Engineering, Georgia Tech, Atlanta, GA, USA. His research interests include reinforcement learning, game theory, cyber-physical security, and safe autonomy.\\
	Dr. Vamvoudakis is a recipient of the 2019 ARO YIP Award, the 2018 NSF CAREER Award, the 2021 GT Chapter Sigma Xi Young Faculty Award, and of several international awards including the 2016 International Neural Network Society Young Investigator (INNS) Award.
\end{IEEEbiography}


\end{document}